\newcolumntype{Y}{>{\centering\arraybackslash}X}
\newtheorem{thm}{Theorem}
\newtheorem{lm}{Lemma}
\newtheorem{defin}{Definition}
\newtheorem{cor}{Corollary}
 \newcommand{\id}{\mathbb{1}}
\newcommand{\DD}{\mu}
\newcommand{\CD}{m}
\newcommand{\dd}{d}
\newcommand{\mo}{$\CD^{\star}$-orthogonal}
\newcommand{\moity}{$\CD^{\star}$-orthogonality}
\begin{document}
\title{Qudit Color Codes and Gauge Color Codes in All Spatial Dimensions} 
\author{Fern H.E. Watson}
\email{fern.watson10@imperial.ac.uk}
\affiliation{Department of Physics and Astronomy, University College London, Gower Street, London WC1E 6BT, UK.}
\affiliation{Department of Physics, Imperial College London, Prince Consort Road, London SW7 2AZ, UK.}
\author{Earl T. Campbell}
\affiliation{Department of Physics and Astronomy, University of Sheffield, Sheffield, S3 7RH, UK.}
\author{Hussain Anwar}
\affiliation{Department of Mathematical Sciences, Brunel University, Uxbridge, Middlesex UB8 3PH, UK.}
\author{Dan E. Browne}
\affiliation{Department of Physics and Astronomy, University College London, Gower Street, London WC1E 6BT, UK.}

\begin{abstract}
Two-level quantum systems, qubits, are not the only basis for quantum computation. Advantages exist in using qudits, $\dd$-level quantum systems, as the basic carrier of quantum information. We show that color codes---a class of topological quantum codes with remarkable transversality properties---can be generalized to the qudit paradigm. In recent developments it was found that in three spatial dimensions a qubit color code can support a transversal non-Clifford gate, and that in higher spatial dimensions additional non-Clifford gates can be found, saturating Bravyi and K\"onig's bound [Phys. Rev. Lett. \textbf{110}, 170503 (2013)]. Furthermore, by using gauge fixing techniques, an effective set of Clifford gates can be achieved, removing the need for state distillation. We show that the qudit color code can support the qudit analogues of these gates, and show that in higher spatial dimensions a color code can support a phase gate from higher levels of the Clifford hierarchy which can be proven to saturate Bravyi and K\"onig's bound in all but a finite number of special cases. The methodology used is a generalisation of Bravyi and Haah's method of triorthogonal matrices [Phys. Rev. A \textbf{86} 052329 (2012)], which may be of independent interest. For completeness, we show explicitly that the qudit color codes  generalize  to gauge color codes, and share the many of the favorable properties of their qubit counterparts.  
\end{abstract}

\pacs{03.67.Pp, 03.67.Lx}

\maketitle

\section{Introduction} 

Quantum technologies are often developed in the qubit paradigm, where the basic carrier of quantum information is a two-level quantum system.  Qubits are a natural choice because binary is the language of classical technologies. However, even here, despite the prevalence of binary, its supremacy is questionable. Indeed, Donald Knuth has advocated the use of balanced ternary, a 3-state classical logic~\cite{Knuth1997}.  In the quantum domain, qudits offer a state space with a richer structure than their two-level counterparts, and the merits of this for quantum information have been explored in many contexts~\cite{Gottesman99,Zhou03,Howard12,Anwar12,Campbell12,Duclos13,Anwar14,Watson14,Hutter14}. Recent experiments have shown even large $\dd$ quantum systems can be precisely controlled~\cite{Smith13,AndersonSosa14}.

Here we present a qudit generalisation of a powerful class of quantum error correcting codes, the color codes~\cite{Bombin06,Bombin07,Bombin13,Bombin13b,Kubica14}.  Along with surface codes~\cite{Kitaev03,Bravyi98}, they constitute the most successful topological codes.  Using topology, quantum codes have achieved high error thresholds, whilst proving more practical than concatenated codes.  Thresholds of qudit surface codes indicate improvements with qudit dimension~\cite{Duclos13,Anwar14,Watson14}.  Qubit color codes have several advantages over qubit surface codes, and we show these features can be transferred over into the qudit setting.

There has been some prior investigation into qudit color codes for prime-power $\dd$~\cite{Sarvepalli10} and color codes based on more general groups~\cite{Brell14}, although these works were restricted to 2D topologies.  In this paper, we generalize color codes to any qudit dimension $\dd$ and spatial dimensions $\DD$ up to the point where $ \DD $ factorial is a multiple of $\dd$.  Specifically, given any lattice suitable for constructing qubit color codes, we show how to use the same lattice to construct a qudit color code.   For qubits, a non-Clifford gate can be implemented in color codes in 3 and higher spatial dimensions transversally, i.e. by a tensor product of local unitary gates, an inherently fault-tolerant procedure~\cite{Knill96,Knill98,Raussendorf06}. To avoid confusion between the spatial dimension of the lattice and the Hilbert space dimension of the qudit, we shall always denote the former by the letter  $\DD$ and the latter by the letter $\dd$.

Recently, it has been shown by Bravyi and K\"{o}nig~\cite{BravKoenig13} that a quantum error correcting code in $\DD$ spatial dimensions can support a gate with constant depth from at most the $\DD$th level of the Clifford hierarchy. The fact that color codes can be shown to saturate this bound with transversal gates is a very promising feature, and when combined with gauge fixing techniques~\cite{Bombin13,Paetznick13,Anderson14,OConnor14} enables universal quantum computation without the need of magic state distillation~\cite{BraKit05,Rei03a,Anwar12,Campbell12,BravHaah12,Jones13,Meier13,Campbell14}.  The structure of the qudit Clifford group is very different from its qubit counterpart~\cite{Appleby05,Howard12}.  Nevertheless, we find that 3D color codes also provide transversal non-Clifford gates in the qudit case.

Recently the color codes were generalized to gauge color codes~\cite{Bombin13}---subsystem codes with many advantageous features---including low weight error detection measurements, universal transversal gates via gauge fixing for $\DD>2$, fault-tolerant conversion~\cite{Anderson14} between codes of different spatial dimension~\cite{Bombin14dimension}, and for the $\DD=3$ case, single shot error correction~\cite{Bombin14singleshot}---a robustness to measurement errors without the need for repeated measurements. We show that the qudit color codes introduced here can also be generalized to gauge color codes.

The main technique which we employ is a bipartition of the vertices in the graph that defines the code into \textit{starred} and \textit{unstarred} vertices. We call this the \textit{star-bipartition}, to distinguish it from the other important colorings which define the color codes. The commutation properties of the stabilizer and logical operators of the color codes (and gauge color codes) in the qubit setting can be reduced to the fact that the pairs of  operators  $X\otimes X$ and  $Z \otimes Z$ commute. The star-bipartition we introduce replaces some operators with their complex conjugate with respect to the computational basis of $Z$ eigenstates. For example, replacing the above operators with $X\otimes X^*$ and $Z \otimes Z^*$, respectively. Crucially, this latter pair of operators does commute for qudits of any dimension, and this becomes the starting point for a generalisation of the  color codes from qubits to qudits of any Hilbert space dimension $\dd$ and any spatial dimension $\DD$. The sharp-eyed reader will note that $X^*=X$. We write the star explicitly on a real matrix here, and throughout, to emphasize symmetry and to simplify notation. Thus we see that as long as a pair of $X$- and $Z$-type stabilizers have in common an equal number of starred qudits, they will commute.

Furthermore, the star-bipartition  provides a general framework for constructing transversal gates from higher levels of the Clifford hierarchy. While elements of this technique can be seen in earlier work~\cite{Sarvepalli10,Bombin13}, this is the first time that it has been exploited systematically. The second key technical component of our work is a generalisation of the  triorthogonal matrix technique by Bravyi and Haah~\cite{BravHaah12}.


This paper is structured as follows. We start in Sec.~\ref{sec:stabilizer} by reviewing the stabilizer formalism for $\dd$-level systems. In Sec.~\ref{sec:colour_codes} we describe how to generalize a qubit color code in arbitrary spatial dimensions to a qudit color code by employing the notion of star-conjugation. In Sec.~\ref{sec:m_matrices} we review and generalize the triorthogonal matrix construction. In Sec.~\ref{sec:transversal} we derive conditions on the lattice that must hold for a transversal non-Clifford gate to be implementable on the code. Sec.~\ref{sec:gauge_fixing} explains how the Hadamard can be implemented on the same lattice, although not in the color code, using the technique of gauge fixing. In Sec.~\ref{sec:gauge_color_codes} we show gauge color codes can naturally be defined for all the codes we have introduced, inheriting the favorable features of the qubit gauge color codes. We conclude in Sec.~\ref{sec:conclusion}.

\section{Qudit Stabilizer Codes and the Clifford Hierarchy}\label{sec:stabilizer}

We will consider $\dd$-level quantum systems (qudits) as the building blocks for the constructions of  quantum codes~\cite{Gottesman99}. Unless stated otherwise, the qudit dimension $\dd$ is assumed to be any integer greater than two. The conventional basis states for the $\dd$-level system are taken to be $ \ket{j} $, for $ j\in\mathbb{Z}_{\dd} $. The single qubit Pauli matrices $X$ and $Z$ have natural extensions in higher dimensions~\cite{Gottesman99}. The qudit analogues are
\begin{eqnarray}
X = \sum\limits_{j \in \mathbb{Z}_{\dd}} \ket{j+1} \bra{j}, \hspace{5mm} Z = \sum\limits_{j \in \mathbb{Z}_{\dd}} \omega ^j \ket{j} \bra{j},\nonumber\\
\end{eqnarray}
where $\omega = e^{\frac{2\pi i}{{\dd}}}$. With a slight abuse of terminology, we shall say that two operators $A$ and $B$  ``$\omega$-commute'' if $AB=\omega BA$ and note that $\omega$-commutation holds for $X$ and $Z$. These generalized operators simplify to the familiar Pauli operators for $\dd=2$. 

The single qudit Pauli group is generated (up to  global phases) by $ X$ and $Z$ and the $ n $-qudit Pauli group $\mathcal{P}^{\otimes n} $ is the $ n $-fold tensor product of the single qudit Pauli group. Consider an abelian subgroup $\mathcal{S} \subset \mathcal{P}^{\otimes n}$, such that $ \omega^{j} \id \notin\mathcal{S} $ for nonzero $ j $, then we say that $ \mathcal{S} $ is a stabilizer group, and we refer to its elements as the \textit{stabilizers}. The stabilizer group defines an error correcting code with codewords corresponding to states $ \ket{\psi} $ that are stabilized by the stabilizers, i.e. $ S\ket{\psi}=\ket{\psi} $ for all $ S\in\mathcal{S} $.

The logical operators correspond to the set of operators that commute with $\mathcal{S}$ but are not contained in it. A pair of logical operators $\bar{X}_i$ and $\bar{Z}_i$ $\omega$-commute with each other and hence encode one qudit. 

Gottesman and Chuang~\cite{GottChuang99} introduced a classification of quantum gates known as the \textit{Clifford hierarchy} (CH), which can be defined recursively~\cite{Yoshida14} as 
\begin{eqnarray}
\mathcal{P}_{l} = \{ U | P^{\dagger}UPU^{\dagger} \in \mathcal{P}_{l-1} \hspace{2mm} \forall \hspace{2mm} P \in \mathcal{P}_1 \} ,\nonumber\\
\label{eqn:CH}
\end{eqnarray}
where $\mathcal{P}_1$ is the Pauli group. For example, $\mathcal{P}_2$ is the group of operators that leave the Pauli group invariant under conjugation and is called the Clifford group. We shall describe some important Clifford group gates $H$, $S$ and $\Lambda(X)$ below. In prime qudit dimensions, these are known to generate the whole Clifford group~\cite{Gottesman99,Clark06}. The gate $ H $ is the qudit version of the Hadamard gate (also known as the discrete Fourier transform),
\begin{equation}\label{Eq:QuditH}
H = \frac{1}{\sqrt{d}} \sum_{j,k \in \mathbb{Z}_{\dd}} \omega^{jk} \ket{j}\bra{k},
\end{equation}
the $ S $ gate is the generalisation of the qubit $ \pi/4 $-phase gate,
\begin{equation}\label{Eq:QuditS}
S = \sum_{j \in \mathbb{Z}_{\dd}} \omega^{j^2} \ket{j}\bra{j},
\end{equation}
and $ \Lambda(X) $ is the controlled-$ X $ gate (also known as the SUM gate),
\begin{equation}\label{Eq:QuditCX}
\Lambda(X) = \sum_{j,k \in \mathbb{Z}_{\dd}} \ket{j}_c\ket{k\oplus 1}_t\bra{j}_c\bra{k}_t,
\end{equation}
where $c$ and $t$ are the control and target qudits, respectively. 

The set of gates $\mathcal{P}_l$ in the hierarchy contains all gates from lower levels of the hierarchy. %
To refer to gates in level $l$ of the hierarchy but not level $l-1$ we shall say that the level $l$ of the hierarchy is the lowest level of the hierarchy for which the gate is a member. 

In prime dimensions, it is known that it suffices to supplement the Clifford group with just one non-Clifford gate from the third level of the CH in order to obtain a universal set of gates~\cite{Campbell12}. Such a gate is not unique, and in the qubit case, the $T$ gate $\textrm{diag}(1,e^{i \pi/4})$ is usually chosen for this purpose. For the qudit case, we choose the following particularly convenient definition for the $T$ gate, which is valid in all dimensions except when $\dd = 2,3,6$~\cite{Zhu10, Howard12},
\begin{equation}
T = \sum_{j \in \mathbb{Z}_{\dd}} \omega^{j^3} \ket{j}\bra{j}.
\label{Eq:QuditT}
\end{equation}

It is a consequence of lemmas \ref{rlemma1} and \ref{rlemma2} (below) that for $d\neq2,3,6$ this gate is non-Clifford and inhabits the third level of the CH. In $\dd=3$ and $\dd=6$, the gate defined in equation~\eqref{Eq:QuditT} is not non-Clifford, as it reduces to the Pauli $Z$ gate since $j^3=j$ mod $3$ and mod $6$. However, the following definition provides a suitable alternative $T$ gate for these dimensions. The gate is non-Clifford and in the third level of the CH:
\begin{equation}
T_{3,6} = \sum_{j \in \mathbb{Z}_{\dd}} \gamma^{j^3} \ket{j}\bra{j}.
\label{Eq:QuditT36}
\end{equation}
where $\gamma^3=\omega$ and where the function in the exponent $j^3$ is evaluated in regular arithmetic (or equivalently modulo $3d$).
When we refer to ``the $T$ gate'' in this paper we will always mean a gate of the form of equations~\eqref{Eq:QuditT} or~\eqref{Eq:QuditT36}, depending on the qudit dimension under consideration. For notational convenience we suppress the dependence of $T$ on $d$, since it will always be clear by the context which $T$ gate we require.

Notice how the $ T $ gate has a cubic power in the exponent of $\omega$, in contrast to the quadratic power in the case of the $S$ gate. 
For prime dimensions, the first investigation characterising all the phase gates from the third level of the CH was performed by Howard and Vala~\cite{Howard12}. 
In general, there is a close correspondence between the order of the polynomial in the exponent of $\omega$ and the lowest level of the CH the phase gate belongs to.  Let us define the following family of phase gates in terms of a polynomial function $f_r(j)$ of degree $r$, such that $r \leq \dd$ with coefficients $a_{m} \in \mathbb{Z}_{\dd}$, so $f_{r}(j):=\sum_{m=0}^{r}a_m j^m$: 
\begin{equation}
R_{f_{r}} = \sum_{j \in \mathbb{Z}_{\dd}} \omega^{f_{r}(j)} \ket{j}\bra{j},
\label{Eq:QuditR}
\end{equation}
one can then prove the following useful lemmas.

\begin{lm}\label{rlemma1}
For all $\dd$, all $r\leq d$ and all functions $f_{r}(j)$, the gate $R_{f_{r}}$ is in the $r$th level of the Clifford hierarchy.
\end{lm}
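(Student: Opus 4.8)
The plan is to prove the statement by induction on $r$, tracking how a diagonal phase gate $R_{f_r}$ transforms under conjugation by the generators $X$ and $Z$ of the single-qudit Pauli group, and then extending to the $n$-qudit setting. The base case $r=1$ is immediate: a degree-one polynomial $f_1(j)=a_1 j + a_0$ gives $R_{f_1}=\omega^{a_0}Z^{a_1}$, which is a Pauli operator (up to a global phase), hence in $\mathcal{P}_1$. For the inductive step, suppose the claim holds for all degrees strictly less than $r$. Since $R_{f_r}$ is diagonal, it commutes with $Z$, so $Z^\dagger R_{f_r} Z R_{f_r}^\dagger = \id \in \mathcal{P}_{r-1}$ trivially. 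The content is in conjugation by $X$.

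The key computation is that $X^\dagger R_{f_r} X R_{f_r}^\dagger$ is again a diagonal phase gate whose exponent is a polynomial of degree at most $r-1$. Concretely, $X^\dagger R_{f_r} X = \sum_j \omega^{f_r(j+1)}\ket{j}\bra{j}$, so
\begin{equation}
X^\dagger R_{f_r} X R_{f_r}^\dagger = \sum_{j \in \mathbb{Z}_\dd} \omega^{\,f_r(j+1)-f_r(j)}\ket{j}\bra{j}.
\end{equation}
The finite difference $g(j):=f_r(j+1)-f_r(j)$ of a degree-$r$ polynomial is a polynomial of degree $r-1$ — this is the discrete analogue of differentiation, and one checks it term by term via the binomial expansion of $(j+1)^m$. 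There is one subtlety to handle carefully: the arithmetic is in $\mathbb{Z}_\dd$, and ``degree $r-1$'' must be read in the sense that $g$ agrees on $\mathbb{Z}_\dd$ with some polynomial $f_{r-1}$ having coefficients in $\mathbb{Z}_\dd$; since the binomial coefficients appearing are ordinary integers reduced mod $\dd$, this is automatic, and the condition $r-1 \le \dd$ is inherited from $r \le \dd$. Hence $X^\dagger R_{f_r} X R_{f_r}^\dagger = R_{f_{r-1}}$ for some such $f_{r-1}$, which lies in $\mathcal{P}_{r-1}$ by the inductive hypothesis.

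Since $X$ and $Z$ generate the single-qudit Pauli group up to phase, and global phases are harmless in the defining relation \eqref{eqn:CH}, it follows that $P^\dagger R_{f_r} P R_{f_r}^\dagger \in \mathcal{P}_{r-1}$ for every $P \in \mathcal{P}_1$, so $R_{f_r} \in \mathcal{P}_r$. (One should note that $\mathcal{P}_{r-1}$ is closed under the group operations needed to combine the contributions of $X^a Z^b$-type generators, which holds because each level of the hierarchy is a group in prime dimension and, more generally, because products and conjugates of the diagonal phase gates that arise stay within the family of $R_{f_s}$ with $s \le r-1$.) The main obstacle is purely bookkeeping: verifying that the finite-difference operation genuinely lowers the polynomial degree \emph{when all coefficients live in $\mathbb{Z}_\dd$ and $\dd$ need not be prime}, i.e. that no spurious high-degree behavior is reintroduced by the modular reduction. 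This is resolved by observing that the identity $(j+1)^m - j^m = \sum_{k=0}^{m-1}\binom{m}{k} j^k$ holds over $\mathbb{Z}$ and therefore over $\mathbb{Z}_\dd$ after reduction, so the degree drop survives.
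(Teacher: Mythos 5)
Your proof is correct and takes essentially the same route as the paper's: compute the group commutator $X^\dagger R_{f_r} X R_{f_r}^\dagger$, observe that the resulting phase polynomial is the finite difference $f_r(j+1)-f_r(j)$ of degree $r-1$, and induct down to the Pauli base case $r=1$. You supply more detail than the paper on two points — checking arbitrary Paulis $X^a Z^b$ rather than just $X$, and verifying the degree drop survives reduction mod $\dd$ — both of which are genuine and worth spelling out. One caveat: your parenthetical claim that ``each level of the hierarchy is a group in prime dimension'' is false for $l\ge 3$ (the Clifford hierarchy levels are not generally closed under multiplication beyond $l=2$); fortunately your alternative justification — that conjugation by any $X^a Z^b$ keeps you inside the family $R_{f_s}$ with $s\le r-1$, since $f_r(j+a)-f_r(j)$ is again degree $r-1$ and $Z^b$ commutes with diagonal gates — is the correct one and suffices on its own.
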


\begin{proof}
The proof of this is simple and concise. We begin by calculating 
\begin{equation}\label{hierarchylemma}
    R_{f_{r-1}} = X^{\dagger} R_{f_{r}} X R_{f_{r}}^{\dagger} ,
\end{equation}
where $f_{r-1}$ is a new function 
$    f_{r-1}(j)=f_{r}(j+1)-f_{r}(j) =r a_r j^{r-1} + \cdots $.
Expanding out, the degree $r$ terms cancel, so the leading term is $r a_r j^{r-1}$ and extra terms are all degree $r-2$ or smaller. We now observe that if $r=1$, $R_{f_{r}}$ is a Pauli operator. Using the definition of the Clifford hierarchy in equation~\eqref{eqn:CH}, the lemma follows by induction.
\end{proof}

\begin{lm}\label{rlemma2}
For all $\dd$, all $r\leq d$ and all functions $f_{r}(j)$ satisfying $r! a_r \neq 0 \pmod{d}$, the gate 	$R_{f_{r}}$ is not in the $(r-1)$th level of the Clifford hierarchy.
\end{lm}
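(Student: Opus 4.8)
The plan is to prove the statement by induction on $r$, with the recursive definition~\eqref{eqn:CH} of the Clifford hierarchy and the identity~\eqref{hierarchylemma} from the proof of Lemma~\ref{rlemma1} doing all the work. The mechanism is exactly the bookkeeping already set up there: conjugating $R_{f_r}$ by $X$ yields $R_{f_{r-1}}$ with $f_{r-1}(j) = f_r(j+1) - f_r(j)$, a polynomial of degree exactly one less whose leading coefficient is $r a_r$. Iterating this, the leading coefficient accumulates exactly the descending factors $r, r-1, \dots$, which is precisely the source of the hypothesis $r!\,a_r \neq 0 \pmod{\dd}$.

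For the base case I would take $r = 2$: I claim that $R_{f_2}$ with $2 a_2 \neq 0 \pmod{\dd}$ is not an element of the Pauli group $\mathcal{P}_1$. Any diagonal element of the single-qudit Pauli group is a scalar multiple of a power of $Z$, i.e.\ of the form $\sum_j \omega^{kj + c}\ket{j}\bra{j}$ with an \emph{affine} exponent; equating this with $R_{f_2} = \sum_j \omega^{a_2 j^2 + a_1 j + a_0}\ket{j}\bra{j}$ at the distinct points $j = 0, 1, 2$ (distinct because $\dd > 2$) forces $2 a_2 \equiv 0 \pmod{\dd}$, a contradiction. Equivalently, the first difference $f_2(j+1) - f_2(j) = 2 a_2 j + (a_1 + a_2)$ is not constant, so the phases fail to be in arithmetic progression.

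For the inductive step, fix $r \geq 3$, assume the statement for $r-1$, and suppose towards a contradiction that $R_{f_r} \in \mathcal{P}_{r-1}$. Applying~\eqref{eqn:CH} with the Pauli $P = X$ gives $X^{\dagger} R_{f_r} X R_{f_r}^{\dagger} \in \mathcal{P}_{r-2}$, and by~\eqref{hierarchylemma} this operator equals $R_{f_{r-1}}$, whose degree-$(r-1)$ coefficient is $a'_{r-1} = r a_r$. Since $r - 1 \leq \dd$ this $R_{f_{r-1}}$ is a legitimate member of the family~\eqref{Eq:QuditR}, and $(r-1)!\,a'_{r-1} = r!\,a_r \neq 0 \pmod{\dd}$, so the inductive hypothesis yields $R_{f_{r-1}} \notin \mathcal{P}_{r-2}$, contradicting the previous sentence. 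Hence $R_{f_r} \notin \mathcal{P}_{r-1}$.

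I do not anticipate a genuine obstacle; the only delicate points are choosing where to anchor the induction (taking $r = 2$ and reasoning about Paulis directly avoids having to assign a meaning to a level "$\mathcal{P}_0$") and checking that the base-case evaluation at $0,1,2$ is legitimate, which it is under the paper's standing assumption $\dd > 2$. Combined with Lemma~\ref{rlemma1}, this places $R_{f_r}$ at exactly level $r$ of the hierarchy, and in particular shows the gate~\eqref{Eq:QuditT} is a genuine third-level, non-Clifford gate whenever $\dd \neq 2,3,6$ (with the variant~\eqref{Eq:QuditT36} handling $\dd \in \{3,6\}$ by an analogous computation).
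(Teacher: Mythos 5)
Your proof is correct and uses essentially the same mechanism as the paper: iterating the identity~\eqref{hierarchylemma} so that the leading coefficient accumulates the factor $r!\,a_r$, then concluding that membership in $\mathcal{P}_{r-1}$ would force this to vanish. The paper unwinds the iteration directly down to $f_1(j) = r!\,a_r j + c$ and argues the resulting operator would have to be proportional to the identity; you package the same calculation as a formal induction anchored at $r=2$, which is a cleaner way to avoid reasoning about a level ``$\mathcal{P}_0$'' of the hierarchy but is not a different argument.
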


\begin{proof}
If the gate is in the $(r-1)$th level of the hierarchy, then applying the inductive transformation in the previous proof $r-1$ times must return a Pauli operator, and thus applying it $r$ times results in an operator proportional to the identity. By chaining together transformations of the form of equation~\eqref{hierarchylemma} $n$ times, we obtain $    f_{1}(j)=r! a_r j + c $. If the corresponding operator is proportional to the identity then this function must be constant, and thus $r! a_r=0$. 
\end{proof}

These two lemmas provide us with a simple way to generate gates at all levels of the Clifford hierarchy as required. They fail when  $r! a_r=0$, which was the case above for $r=3$ and $\dd=2,3,6$. In those exceptional cases, gates can be discovered by moving to higher roots of unity as illustrated by equation~\eqref{Eq:QuditT36}.

\section{Qudit  color codes} \label{sec:colour_codes}

Color codes~\cite{Bombin06} are a class of topological qubit stabilizer codes that be defined on a topological space of any spatial dimension~\cite{Bombin07a} $\DD\ge2$.  The $\DD$-dimensional manifold is celluated into a lattice of objects called $k$-\textit{cells} for all spatial dimensions $0 \le k \le \DD$.  For example, vertices are $0$-cells, edges are $1$-cells and a $2$-cell is a cell defined on the faces, or plaquettes, of the lattice.  We define 
\begin{defin}
\label{mucolex} 
A lattice $\mathcal{L}$ is called a $\DD$-colex (colex is short for ``color complex'') whenever 
\begin{enumerate}
    \item it is a celluation of an orientable $\DD$-dimensional manifold without a boundary; and
    \item every vertex has $(\DD+1)$ neighbors ($(\DD+1)$-valency); and
    \item the $\DD$-cells are $(\DD+1)$-colorable.   
\end{enumerate}
\end{defin}
Any $\DD$-colex defines a qubit color code. For example, the smallest 3-dimensional color code is a 15-qubit code defined on the lattice illustrated in figure~\ref{fig:tetra}(a). The stabilizer group is generated by face operators (with a $Z$ operators assigned to the vertices around the face or 2-cell) and cell operators (with an $X$ operators assigned to the vertices around each 3-cell), see figure~\ref{fig:tetra}(b). Our results show a $\DD$-colex also defines a qudit color code.  An alternative description of a $\DD$-colex is that its dual $\mathcal{L}^*$ is a simplical lattice.  This is lattice where each $k$-cell is a simplex, an object with $k+1$ vertices. In the dual, we have conditions: ($2^*$) every $\DD$-cell has $(\DD+1)$ neighbors; ($3^*$) the vertices are  $(\DD+1)$ colorable.  

Such codes can be constructed~\cite{Bombin13}  by starting with a closed hyperspherical lattice and then removing a vertex to ``puncture'' the surface. Alternatively, as we show in Sec.~\ref{sec:code_distance}, one can construct them directly by defining a suitable boundary on a regular lattice structure.  Such a code encodes a single qubit and has the attractive feature of a transversal non-Clifford $T$ gate. The qudit color codes have many useful properties. For more details, we refer the reader to~\cite{Bombin13,Bombin06,Bombin07a}.

The stabilizer generators are guaranteed to commute in this construction. Those of the same type ($X$ or $Z$) trivially commute, while those of different types commute since they always meet in pairs. As already remarked, $X\otimes X$ and $Z\otimes Z$ commute only in $\dd=2$.  However, by replacing one operator of each pair with its complex conjugate we produce a pair of operators that commute for all $\dd$. These are $X\otimes X^*$ and $Z\otimes Z^*$. To define color codes in all qudit dimensions, we need, therefore, to find a construction which allows us to take advantage of the commutation of $X\otimes X^*$ and $Z\otimes Z^*$. This can be achieved by identifying and exploiting a bipartition or bicoloring of the graph defining the lattice. To avoid confusion with other colorings of the lattice important for color codes we call this the star-bipartition. 
	 
We note that a similar construction was defined (for prime power qudit dimension only) in~\cite{Sarvepalli10}. Here we go further and show that  the star-bipartition is the starting point for an identification of a broad family of transversal gates on these codes, including non-Clifford gates saturating Bravyi and K\"onig's bound.

\subsection{Star-bipartition}

The  star-bipartition is a bipartition of the color code lattice $\mathcal{L}$. It thus divides the set of lattice vertices into starred and unstarred vertices where neighboring vertices always belong to different sets. Here we prove the following useful bipartition lemmas
\begin{lm}[Star-bipartition lemma]
\label{StarLem}
Let $\mathcal{L}$ be a $\DD$-colex, then its vertices can be 2-colored into starred and unstarred sets, $v_{\star}$ and $v_{\bullet}$, respectively.
\end{lm}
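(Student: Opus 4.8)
The plan is to recognise the asserted $2$-colouring as nothing more than a proper $2$-colouring of the $1$-skeleton of $\mathcal{L}$: a partition $V=v_\star\sqcup v_\bullet$ with no edge internal to either class is, by definition, a bipartition, and conversely any bipartition provides the colouring. So the whole content of the lemma is that the graph of a $\DD$-colex is bipartite, i.e.\ has no odd cycle. Rather than chase cycles in $\mathcal{L}$ directly, I would pass to the dual picture. In $\mathcal{L}^*$ the vertices of $\mathcal{L}$ become the top $\DD$-simplices, and an edge of $\mathcal{L}$ becomes a pair of $\DD$-simplices sharing a $(\DD-1)$-face; since $\mathcal{L}^*$ triangulates a closed manifold, every $(\DD-1)$-simplex lies in exactly two $\DD$-simplices, so this correspondence is clean. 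By condition $(3^{*})$ the vertices of $\mathcal{L}^*$ are properly $(\DD+1)$-coloured, hence each $\DD$-simplex carries exactly one vertex of each colour, and therefore a canonical \emph{colour ordering} of its vertex set.

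Next I would bring in orientability. Because $\mathcal{L}^*$ triangulates an orientable closed $\DD$-manifold (Definition~\ref{mucolex}, condition~1) it admits a coherent orientation of its $\DD$-simplices, i.e.\ an orientation of each top simplex such that any two meeting in a $(\DD-1)$-face induce opposite orientations on that face. For each $\DD$-simplex $\sigma$ set $\epsilon(\sigma)=+1$ if its colour ordering represents the coherent orientation and $\epsilon(\sigma)=-1$ otherwise. The heart of the argument is to show that $\epsilon$ takes opposite values on any two $\DD$-simplices $\sigma,\sigma'$ sharing a facet $\tau$. In a boundaryless manifold $\sigma$ and $\sigma'$ differ in exactly one vertex, and that vertex must carry in both simplices the unique colour $k$ absent from $\tau$; writing the colour orderings of $\sigma$ and $\sigma'$ with this distinguished vertex in position $k$, the orientation each induces on $\tau$—taken with its own colour ordering—is $\epsilon(\sigma)(-1)^{k}$ times that ordering, respectively $\epsilon(\sigma')(-1)^{k}$ times it. Coherence forces these to be opposite, whence $\epsilon(\sigma)=-\epsilon(\sigma')$.

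Finally I would set $v_\bullet:=\{\,v:\epsilon(\sigma_v)=+1\,\}$ and $v_\star:=\{\,v:\epsilon(\sigma_v)=-1\,\}$, where $\sigma_v$ is the $\DD$-simplex of $\mathcal{L}^*$ dual to the vertex $v$ of $\mathcal{L}$. The step above says exactly that $\epsilon$ flips across every dual facet, i.e.\ that adjacent vertices of $\mathcal{L}$ lie in different classes, which is the star-bipartition. The obstacle I anticipate is bookkeeping rather than conceptual: one must pin down the sign convention for the orientation a coherently oriented top simplex induces on a facet, and check carefully that "celluation of an orientable closed manifold" really does hand us a coherent orientation of the simplicial dual $\mathcal{L}^*$. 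Connectedness of $\mathcal{L}$ is worth a remark for the bipartition to be nondegenerate, but the construction of $v_\star$ and $v_\bullet$ goes through verbatim on each connected component.
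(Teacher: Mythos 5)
Your proposal is correct and follows essentially the same route as the paper: pass to the dual simplicial lattice $\mathcal{L}^*$, combine the $(\DD+1)$-colouring of its vertices with a coherent orientation guaranteed by orientability, and bipartition the top simplices by whether the colour-induced ordering agrees or disagrees with the coherent orientation. The paper phrases this via "colored-orientations differing by an even permutation" while you encode it in the sign function $\epsilon$, and you track the sign contribution from removing a vertex slightly more explicitly, but the key ideas — duality, colouring, orientability, and the parity flip across shared facets forced by coherence — are the same.
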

\noindent
Recall that the $\DD$-colexes have several properties listed in Def.~\ref{mucolex} that are essential to proving the above lemma. Using $|\ldots|$ to denote the number of elements in a set, we also have
\begin{lm}[Starring of cells lemma]
\label{StarCellLem}
Let $\mathcal{L}$ be a $\DD$-colex, so that its vertices are partitioned into $v_{\bullet}$ and $v_{\star}$ according to lemma~\ref{StarLem}.  It follows that 
\begin{enumerate}
  \item $|v_{\star}|=|v_{\bullet}|$; and
  \item any $k$-cell $C$ with $0<k \leq \DD$ contains a equal number of vertices from each partition, so $| C \cap v_{\star} | = | C \cap v_{\bullet} | $.\end{enumerate}
\end{lm}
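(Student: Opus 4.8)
The plan is to derive both parts of lemma~\ref{StarCellLem} from a single elementary fact: in an $r$-valent bipartite graph with color classes $A$ and $B$, counting edge-endpoints on each side shows that $r|A|$ and $r|B|$ both equal the number of edges, whence $|A|=|B|$ once $r\geq 1$. Lemma~\ref{StarLem} supplies the bipartiteness, and Def.~\ref{mucolex} (via its dual simplicial reformulation) supplies the relevant valencies.

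For part~1, let $\Gamma$ be the $1$-skeleton of $\mathcal{L}$, i.e.\ the graph on the $0$-cells with the $1$-cells as edges. By the second condition of Def.~\ref{mucolex}, $\Gamma$ is $(\DD+1)$-valent, and by lemma~\ref{StarLem} it is bipartite with classes $v_{\star}$ and $v_{\bullet}$, so every edge has exactly one endpoint in each class. Applying the fact above with $r=\DD+1$ gives $|v_{\star}|=|v_{\bullet}|$ immediately.

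For part~2, fix a $k$-cell $C$ with $0<k\leq\DD$ and pass to the dual lattice $\mathcal{L}^{*}$, which is simplicial. Duality sends $C$ to a $(\DD-k)$-simplex $\sigma_{C}$ (with $\DD-k+1$ vertices), a vertex $v\in C$ to a $\DD$-simplex $\sigma_{v}$ (with $\DD+1$ vertices) having $\sigma_{C}$ as a face, and a $1$-cell $e$ incident to $v$ to a $(\DD-1)$-simplex $\sigma_{e}$ that is a facet of $\sigma_{v}$. I claim the $1$-skeleton of $C$ is $k$-valent: the edges of $C$ meeting $v$ correspond exactly to those facets of $\sigma_{v}$ that still contain $\sigma_{C}$, and a facet is obtained by deleting one of the $\DD+1$ vertices of $\sigma_{v}$, so it contains $\sigma_{C}$ precisely when the deleted vertex is one of the $(\DD+1)-(\DD-k+1)=k$ vertices of $\sigma_{v}$ not already in $\sigma_{C}$ --- exactly $k$ choices. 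In particular no vertex of $C$ is isolated, so the $1$-skeleton of $C$ has vertex set $C\cap v_{\star}\,\cup\,C\cap v_{\bullet}$; being a subgraph of $\Gamma$ it is bipartite. Applying the elementary fact with $r=k\geq1$ gives $|C\cap v_{\star}|=|C\cap v_{\bullet}|$.

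The only step that needs genuine care is this $k$-valency of the $1$-skeleton of a $k$-cell; everything else is bookkeeping. The cleanest route is the one above, through $\mathcal{L}^{*}$, since there the incidences ``$v\in C$'', ``$e\ni v$'', ``$e\subset C$'' become the transparent face relations $\sigma_{C}\subseteq\sigma_{v}$, $\sigma_{e}\subseteq\sigma_{v}$ with $\dim\sigma_{e}=\DD-1$, and $\sigma_{C}\subseteq\sigma_{e}$. One should still check that each such facet $\sigma_{e}$ really is the dual of a $1$-cell of $C$, which holds because $\mathcal{L}^{*}$ is a simplicial complex (hence contains all faces of its simplices) and duality is a bijection on cells of every dimension.
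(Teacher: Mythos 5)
Your proof is correct, and both parts are obtained from the same edge-counting lemma the paper uses. Part~1 is identical to the paper's argument. For part~2 you arrive at the same reduction the paper makes --- that the $1$-skeleton of a $k$-cell is $k$-valent and bipartite, so the count applies --- but you establish the $k$-valency differently. The paper proves it iteratively: it argues that for a $\DD$-cell $C$, the induced lattice $\mathcal{L}_C$ is itself a $(\DD-1)$-colex (by deleting the dual vertex $C^*$ from $\mathcal{L}^*$ and observing what remains is a $(\DD-1)$-dimensional simplicial lattice), and then descends to lower $k$ by repeating the argument inside the smaller colex. You instead give a one-shot combinatorial count in the dual: $C\mapsto\sigma_C$ a $(\DD-k)$-simplex, $v\mapsto\sigma_v$ a $\DD$-simplex containing $\sigma_C$, and the edges of $C$ at $v$ biject with facets of $\sigma_v$ containing $\sigma_C$, of which there are exactly $(\DD+1)-(\DD-k+1)=k$. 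Your version buys directness and avoids having to verify that the intermediate $\mathcal{L}_C$ inherits the full colex structure (coloring, manifold-without-boundary, etc.) before iterating, whereas the paper's version buys the stronger intermediate fact --- later reused implicitly in lemma~\ref{Colorcode_morth_lemma} --- that cell boundaries are again colexes. You were right to flag the need to check that every facet of $\sigma_v$ containing $\sigma_C$ is genuinely a simplex of $\mathcal{L}^*$ dual to a $1$-cell of $C$; that check is sound for the simplicial-complex reason you give, together with the fact that on a closed manifold every $(\DD-1)$-simplex lies in exactly two $\DD$-simplices, so it dualizes to an honest edge.
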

Lemmas~\ref{StarLem} and~\ref{StarCellLem}, or variants thereof, were already proved in prior research~\cite{Bombin13,Bombin06,Bombin07a,Kubica14}.  These results are especially important to qudit color codes and play a fundamental role, and so, for completeness, we present our own proofs.  Our presentation of lemma~\ref{StarLem} has the merit of being more concise than that of Ref.~\cite{Kubica14}. 

Before presenting our proof, we review the concept of orientation from topology theory~\cite{Lee}.  An orientation on a $\mu$-simplex is an ordered list of its vertices, with orientations considered equivalent whenever they differ by a permutation generated from an even number of transpositions (a transposition is swap of two elements). All permutations can be generated by either an even or odd number of transpositions, and so there are two possible orientations for a $\mu$-simplex. The $\mu$-simplex contains subsimplices that can be obtained by removing a single vertex. This vertex removal provides an induced orientation on the subsimplex, such that removing $v_j$ from a simplex with orientation $\{ v_1, v_2, \ldots, v_{j-1}, v_j, v_{j+1}, \ldots, v_\mu , v_{\mu+1} \}$ induces an orientation $\{ v_1, v_2, \ldots, v_{j-1}, v_{j+1}, \ldots, v_\mu\,, v_{\mu+1} \}$.  Consider two oriented $\mu$-simplices with a common subsimplex (that is, they are are neighbors).  We say their orientations are consistent if they induce opposite orientations to their common subsimplex. An entire lattice of simplices is said to be orientable if there exists a choice of orientations such that all pairs of simplices are consistently orientated.  An example of a consistently orientated lattice is shown in Fig.~\ref{fig:lemma3}. Orientablity of a lattice is a topological feature that depends on the underlying manifold and from this definition one can show that many familiar manifolds are orientable, including Euclidean, spherical and toroidal manifolds.  We now employ these concepts in our proof.

\begin{figure}
\includegraphics[width=1\columnwidth]{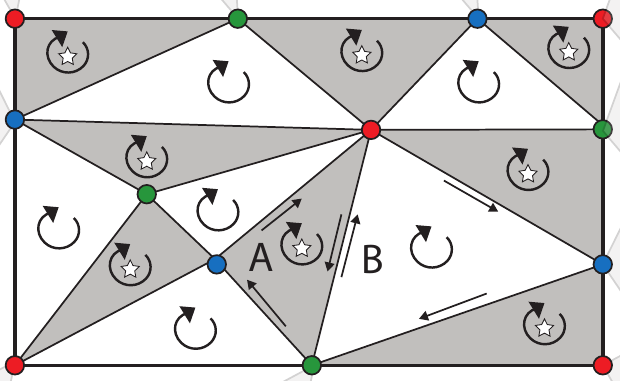} \centering
\caption{(Color online) A patch of an orientated simplicial lattice with 3 colored vertices. The orientation is visually represented by a clockwise symbol.  Formally, the orientation is an ordered list of vertices $\{x,y,z\}$ moving clockwise round the simplex.  For neighboring simplices $A$ and $B$, we also show the induced orientation on their edges.  For an edge orientation $\{x,y\}$, we illustrate this with an arrow going from $x$ to $y$. The simplices $A$ and $B$ induce opposite orientations to their common edge, and so are consistently orientated. If a simplex has an orientation $\{a,b,c \}$ with colors $\{ \mathrm{red},  \mathrm{green},  \mathrm{blue} \}$ then we label it starred, and otherwise unstarred.  The star-bipartition lemma further formalises this notion of starring in terms of color and orientation.}
\label{fig:lemma3}
\end{figure} 
 
\begin{proof}
To prove the star-bipartition lemma, we switch to the dual lattice $\mathcal{L}^*$.  In the dual picture, we need to show that the $\DD$-dimensional cells can be 2-colored.  Property 2 of Def.~\ref{mucolex} ensures that the dual lattice is simplical. Property 3 of Def.~\ref{mucolex} ensures the dual vertices are $(\DD+1)$ colored.  This coloring can be described by a map $\mathcal{C}: \mathcal{L}^* \rightarrow \mathbb{Z}_{\mu+1}$ that assigns color label $\mathcal{C}(v_j)$ to vertex $v_j$.  Given a simplex $\sigma$ with an orientation, say $\{ v_1, v_2,\ldots v_{\mu+1} \}$, we can apply the color map to the orientation to obtain a colored-orientation $\{ \mathcal{C}(v_1), \mathcal{C}(v_2),\ldots \mathcal{C}(v_{\mu+1}) \}$.  This defines a bipartition (the star-bipartition) with two simplicies belonging to the same partition if their colored-orientation differs by an even permutation. The remainder of the proof establishes that neighbouring simplicies belong to different partitions.  Two neighbouring oriented simplices $A$ and $B$ always share a common $(\mu-1)$ subsimplex.  Let us call the vertices they share in common $\{ v_1, \ldots, v_\mu \}$.  Simplex $A$ also contains a vertex $v_A$ and we label its orientation $\{ v_1, \ldots, v_\mu, v_A \}$. Similarly, simplex $B$ contains additional vertex $v_B$ and has orientation $\{ v_{\Pi(1)}, v_{\Pi(2)},\ldots v_{\Pi(\mu)}, v_B  \}$ where $\Pi$ is some permutation of vertex labels.  No generality is lost by placing $v_A$ and $v_B$ last in their respective orientations as this can always be achieved with an even number of transpositions (assuming $\mu>1$). Since the manifold is orientable, we can assume a consistent orientation on their common subsimplex.  We see that $A$ induces the orientation $\{ v_1, \ldots, v_\mu\}$ and $B$ induces the orientation $\{ v_{\Pi(1)}, \ldots, v_{\Pi(\mu)}\}$, and so consistency demands that $\Pi$ is an odd permutation of the vertex labels.  Next, we show this lifts to odd permutation in the color-orientation of $A$ and $B$.    Simplex $A$ has color-orientation $\{ \mathcal{C}(v_1), \ldots, \mathcal{C}(v_\mu), \mathcal{C}(v_A) \}$ and simplex $B$ has color-orientation $\{ \mathcal{C}(v_{\Pi(1)}), \ldots, \mathcal{C}(v_{\Pi(\mu)}), \mathcal{C}(v_B) \}$.  Since the lattice is $(\mu+1)$ colored, we know $v_A$ and $v_B$ must possess the same color, which is whatever color is absent from their common subsimplex.  Permutation of vertex labels results in a corresponding permutation of colors. Therefore, the color-orientations of $A$ and $B$ differ by an odd permutation of their first $\mu$ elements, which corresponds to an overall odd permutation. Since all neighboring $\DD$-simplicies must have opposite color-orientations, they belong to separate partitions of our star bipartition, proving the lemma. 
\end{proof}

Let us now turn to lemma~\ref{StarCellLem}.

\begin{proof}
We shall begin with part \textit{1} of the lemma. Consider a bipartite lattice that is $r$-valent. We count the number of edges $N_E$.  Every edge is incident to one, and only one, starred vertex. Furthermore, each vertex is contained in $r$ edges, and so the total count is $N_E = |v_{\star}| r$.  The same argument applies to unstarred vertex, so $N_E = |v_{\bullet}|r$.  Equating $|v_{\star}| r = |v_{\bullet}| r$, we see that $r>0$ entails $|v_{\star}|  = |v_{\bullet}| $.  Since a $\DD$-colex is $(\DD+1)$-valent, we have proven part \textit{1} of lemma~\ref{StarCellLem}.  

Consider any $k$-cell $C$. It defines a sublattice $\mathcal{L}_C$ with same vertex set as $C$.  If $\mathcal{L}_C$ is a $k$-valent with $k>0$, then the above argument applies and $C$ contains equal number of starred and unstarred vertices.  The following reasoning parallels that in Ref.~\cite{Bombin13}.  We show explicitly that $\mathcal{L}_C$ is a $k$-colex for the  $k=\DD-1$ case, but any $k$ is reached by iteratively applying the argument down to the desired $k$. Again, we must switch to the dual lattice where $\mathcal{L}_C^*$ is obtained by removing a single vertex $C^*$ from $\mathcal{L}^*$.  The sublattice $\mathcal{L}_C^*$ retains all simplices containing $C^*$, but with their dimension reduced by the removal of $C^*$.  Therefore $\mathcal{L}_C^*$ is a simplical lattice of dimension $(\DD-1)$ and so $\mathcal{L}$ is $\DD$-valent.  Our proof only requires the correct valency, which we have shown, but note that $\mathcal{L}_C$ also has the correct coloring for a $(\DD-1)$-colex.   Specifically, if $C'$ and $C$ intersect on some $(\DD-1)$-cell of $\mathcal{L}_C$, then we assign it the color of $C'$. 
\end{proof}

The above results concern a lattice without a boundary.  However, if one punctures the code removing a starred vertex then we have
\begin{cor}
\label{cor_punc}
    Let $\mathcal{L}$ be puncturing of a $\DD$-colex, with the inherited bipartition into $v_{\star}$ and $v_{\bullet}$.  Then
    \begin{enumerate}
    \item $|v_{\star}|=|v_{\bullet}|-1$; and 
  \item any $k$-cell $C$ with $0<k \leq \DD$ contains a equal number of vertices from each partition, so $| C \cap v_{\star} | = | C \cap v_{\bullet} | $.
    \end{enumerate}
\end{cor}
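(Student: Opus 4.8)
The plan is to deduce both parts directly from Lemma~\ref{StarCellLem} applied to the closed $\DD$-colex $\hat{\mathcal{L}}$ out of which $\mathcal{L}$ is obtained by deleting a single starred vertex $v$. Write $\hat v_\star$ and $\hat v_\bullet$ for the starred and unstarred vertex sets of $\hat{\mathcal{L}}$; since the bipartition on $\mathcal{L}$ is the inherited one, $v_\star = \hat v_\star \setminus \{v\}$ and $v_\bullet = \hat v_\bullet$.

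For part \textit{1} I would simply count: by part \textit{1} of Lemma~\ref{StarCellLem} we have $|\hat v_\star| = |\hat v_\bullet|$, and deleting the starred vertex $v$ removes one element from $\hat v_\star$ while leaving $\hat v_\bullet$ untouched, so $|v_\star| = |\hat v_\star| - 1 = |\hat v_\bullet| - 1 = |v_\bullet| - 1$.

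For part \textit{2} the point is that puncturing only removes $v$ and the cells incident to it, leaving every other cell intact. Hence any $k$-cell $C$ of $\mathcal{L}$ with $0 < k \le \DD$ is also a $k$-cell of $\hat{\mathcal{L}}$, namely one that does not contain $v$; its vertex set and the restriction of the bipartition to it are exactly as they were in $\hat{\mathcal{L}}$. Applying part \textit{2} of Lemma~\ref{StarCellLem} in $\hat{\mathcal{L}}$ then gives $|C \cap v_\star| = |C \cap v_\bullet|$ immediately, and since $v \notin C$ this equality is unaffected by the deletion.

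The only thing requiring care is the bookkeeping of what ``puncturing'' does to the cell structure: one must confirm that no $k$-cell with $k > 0$ of the punctured lattice is a genuinely new cell (for instance obtained by merging the faces surrounding the puncture) rather than an inherited one, since such a cell would fall outside the scope of Lemma~\ref{StarCellLem} and would need a separate argument. I expect this to be immediate from the construction recalled above---one deletes the vertex and drops precisely the cells through it, keeping all the rest---so that the statement is a corollary in the strict sense, needing no idea beyond Lemmas~\ref{StarLem} and~\ref{StarCellLem}.
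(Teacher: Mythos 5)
Your proposal is correct and is essentially the paper's own argument: part~\textit{1} follows by applying part~\textit{1} of Lemma~\ref{StarCellLem} to the unpunctured colex and noting one starred vertex is deleted, and part~\textit{2} follows because every surviving $k$-cell ($k>0$) is inherited verbatim from the closed colex, so part~\textit{2} of Lemma~\ref{StarCellLem} applies directly. Your cautionary remark about the cell structure of the puncture is resolved exactly as you expect---puncturing deletes the vertex and every cell through it and creates nothing new---which the paper also states.
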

Property \textit{1} is immediately follows from point \textit{1} of lemma~\ref{StarCellLem} as a single starred vertex has been removed.  Property \textit{2} is identical to its partner in lemma~\ref{StarCellLem}. The punctured lattice only keeps cells that did not contain the punctured vertex, and so the property is directly inherited.  The dual lattice $\mathcal{L}^*$ has been an essential proof tool for these lemmas, but in the rest of the paper we shall only consider the primal lattice.


\subsection{Qudit color codes in arbitrary spatial dimension}

\begin{figure}
\includegraphics[width=1\columnwidth]{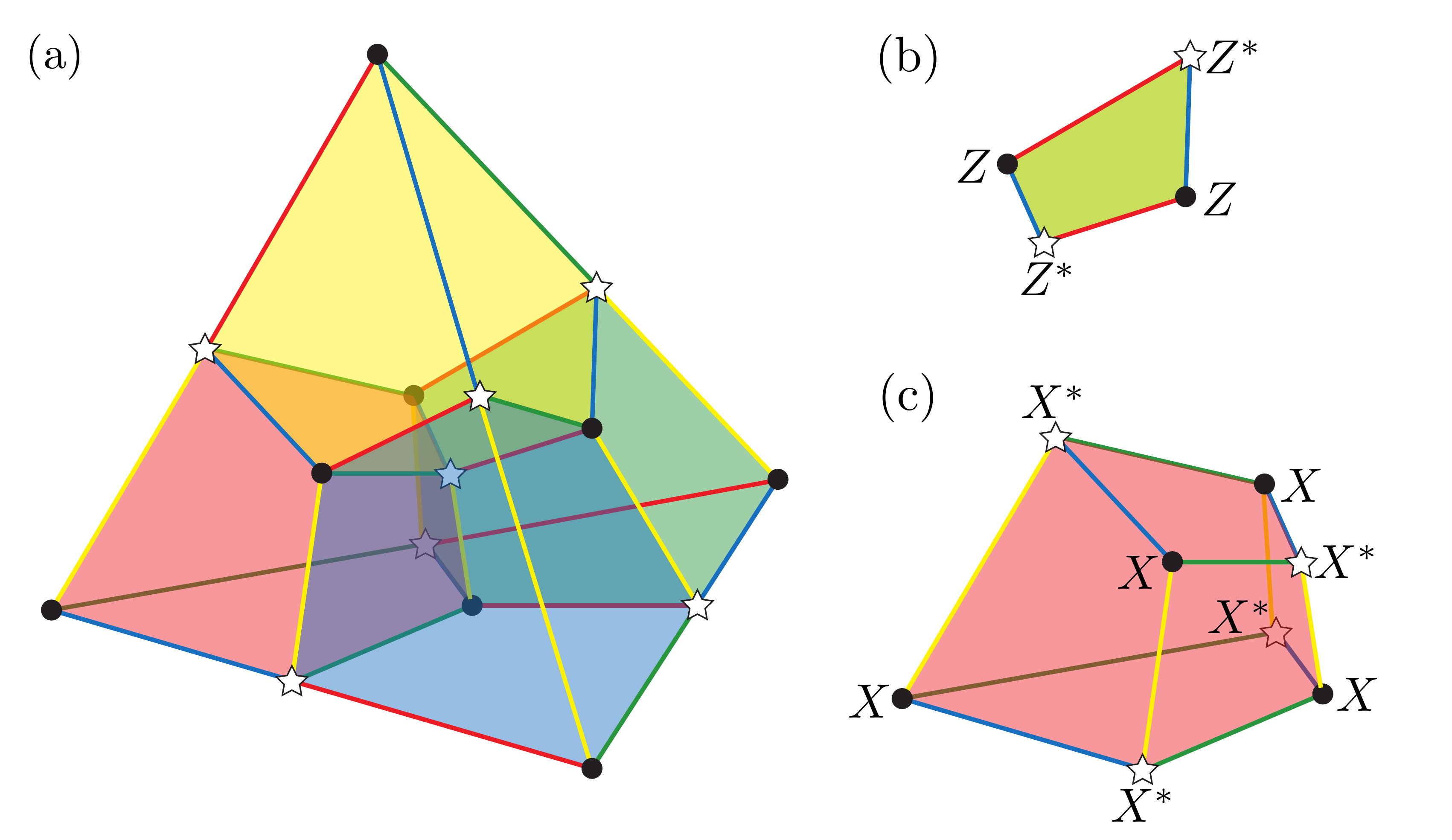} \centering
\caption{(Color online) (a) The smallest instance of the qudit 3D color code with the $X$-type stabilizers colored red, green, blue and yellow. The 1-cells (edges) of the code are also colored. The vertices are also colored so the set $v_{\bullet}$ is represented by black circles and the set $v_{\star}$ is represented by white stars. (b) A single $Z$ stabilizer of the tetrahedral 3D color code. The plaquette can take the color yellow if considered as a face of the green 3-cell, and vice versa. (c) A single $X$ stabilizer of the tetrahedral 3D color code. }
\label{fig:tetra}
\end{figure}


Before we write down the stabilizer generators for these codes, we shall use the star-bipartition to define notation for an important family of transversal operators which we call star-conjugate transversal.
Throughout this paper, when discussing color codes we identify qudits with vertices in the lattice defining the code. Let $v$ denote a set of vertices (and thus the corresponding qudits). Let us introduce the notation

\begin{equation}
U[v]=U ^{\otimes|v|}\end{equation}
to denote the tensor product of unitary $U$ acting on each qudit identified with the vertices in $v$.

We call an operator \textit{star-conjugate transversal} when it has the following form
\begin{equation}
\tilde{U} = U[v_{\bullet}] \otimes U^{*}[v_{\star}]. \label{eqn:U_trans} 
\end{equation}
In other words,  $\tilde{U}$ consists of $U$ applied to all unstarred vertices and $U^*$ applied to all starred vertices.

To define qudit color codes we need to introduce two types of stabilizer generators which are defined with respect to cells of different dimensions within the lattice. As in the qubit case, we associate $Z$-type stabilizer generators with the $\DD^{\prime}$-cells of the lattice and  $X$-type stabilizers with the $(\DD-\DD^{\prime}+2)$-cells~\cite{Bombin07a}. In this section, for simplicity of presentation, we will take the example of $\DD^{\prime}=2$ so the $Z$-type stabilizers are associated with plaquettes ($2$-cells) and $X$-type stabilizers are associated with $\DD$-cells, although the construction holds in the more general case, which we shall consider later. For example in 3D, the $X$ stabilizers act on the vertices contained in a 3-cell of the lattice, and the $Z$ stabilizers act on the vertices contained in a 2-cell (plaquette). This can be seen in figure~\ref{fig:tetra}.
 
Setting $\DD^{\prime}=2$ the stabilizer generators therefore take the form 
\begin{eqnarray}
S_{X,C} &=& X[v_{\bullet} \cap C] \otimes X^{*}[v_{\star} \cap C], \label{eqn:X_stab}\\
S_{Z,P} &=& Z[v_{\bullet} \cap P] \otimes Z^{*}[v_{\star} \cap P]. \label{eqn:Z_stab}
\end{eqnarray}
for all $2$-cells $P$, and $\DD$-cells $C$ of the lattice. Recall that we write the conjugate operator $X^*$ explicitly to emphasis the ubiquity of starring, even though $X$ is a real operator in the computational basis and $X=X^*$.  It is not just the $S_{Z,P}$ operators where starring is non-trivial, but many logical operators also require starring.

Let us denote the group generated by $S_{X,C}$ operators $S_X$ and $S_{Z,P}$ operators by $S_Z$.  All elements of $S_{X}$ commute as do all elements of $S_{Z}$. It remains to show that all elements of $S_{X}$ commute with all elements of $S_{Z}$. This follows from the fact that $X \otimes X^*$ commutes with $Z \otimes Z^*$. The above construction ensures that whenever cell $C$ and cell $P$ overlap they overlap on an equal number of starred and unstarred vertices, which is a point discussed further in Sec.~\ref{moitySec}.  Hence these stabilizer generators commute as required. Note that the qubit $\dd=2$ case is included in our definition.


We noted above that, in the constructions we consider, the code will contain one more unstarred qudit than there are starred qudits. We can thus define the logical encoded Pauli operators for the code star-conjugate transversally.
\begin{eqnarray}
\begin{array}{l}
\bar{X} = \tilde{X} = X[v_{\bullet}] \otimes X^{*}[v_{\star}], \\
\bar{Z} = \tilde{Z} = Z[v_{\bullet}] \otimes Z^{*}[v_{\star}].
\end{array} \label{eqn:log_ops}
\end{eqnarray}
 The fact that $|v_{\star}|=|v_{\bullet}|-1$ ensures that the logical operators satisfy the same commutation properties as $X$ and $Z$. One can verify that these operators commute with the stabilizer operators defined above by recalling that each 2-cell $P$ and and $\DD$-cell $C$ contains an equal number of starred and unstarred vertices.

\subsection{Constructing codes of any code distance} \label{sec:code_distance}

In topological stabilizer codes we expect to be able to increase the number of physical systems encoding the quantum information in order to protect the information more effectively---a property characterized by the code distance. We have remarked already that suitable codes can be constructed by puncturing a hypersphere.  In this section we provide, however, a constructive alternative method to show that in $\DD$ spatial dimensions the color code distance can be made arbitrarily large.

The code distance is the weight of the smallest logical operator. It was convenient above, in equations~\eqref{eqn:log_ops}, to define the logical operator basis star-conjugate transversally. However, multiplying by a subset of the stabilizer group we can localize $\bar{Z}$ to any edge of the polytope formed by the lattice, where such an object is a 1-dimensional sub-manifold of the lattice between two of the vertices of the polytope. Each such vertex is contained in only a single $\DD$-cell  (as opposed to those contained in two or more cells of the lattice), and hence there are exactly ($\DD+1$) vertices of the polytope. 

In $\DD$ spatial dimensions, an edge of the polytope is comprised of 1-cells of two different colors. Similarly, 2-cells contain 1-cells of two colors. Beginning from the transversal definition of the operators in  equations~\eqref{eqn:log_ops} and multiplying $\bar{Z}$ by the subset of $Z$ stabilizers defined on 2-cells containing 1-cells of the same pair of colors, the logical operator is localised to the edge of the polytope uniquely defined as containing 1-cells matching this pair of colors. The constraints placed on the subset of stabilizers chosen, and the initial choice of transversal operator basis ensures that the minimum length of a string-like logical operator defined in this way is the length of the edge of the polytope colored as indicated. The minimum length of a string-like $\bar{Z}$ in such a definition, and hence the code distance, is then length of the shortest edge of the polytope. Thus by showing that the length of an edge of the polytope can be made arbitrarily large we demonstrate the ability to create codes of any distance. 

We begin with an example in 3D and then generalize the argument to higher spatial dimensions. The particular choice we make for this example is a 4-colorable tiling of \textit{truncated octahedra}, part of which is illustrated in figure~\ref{fig:truncated_octa}. This is not a unique choice---for an alternative see~\cite{Bombin07a} where the 3D color code is defined on a lattice constructed from cubes and truncated octahedra.

In addition to the two lattices already mentioned, we have identified two further possible choices of regular tilings in 3D Euclidean space on which a color code may be defined. One is the cantitruncated cubic honeycomb, comprising truncated cuboctahedra (polyhedra with square, hexagonal and octahedral faces), truncated octahedra, and cubes. The second is the omnitruncated cubic honeycomb comprising truncated cuboctahedra and octahedral prisms. This list is not exhaustive, for instance many more tilings may exist in curved space. 

\begin{figure}
\includegraphics[width=0.35\textwidth]{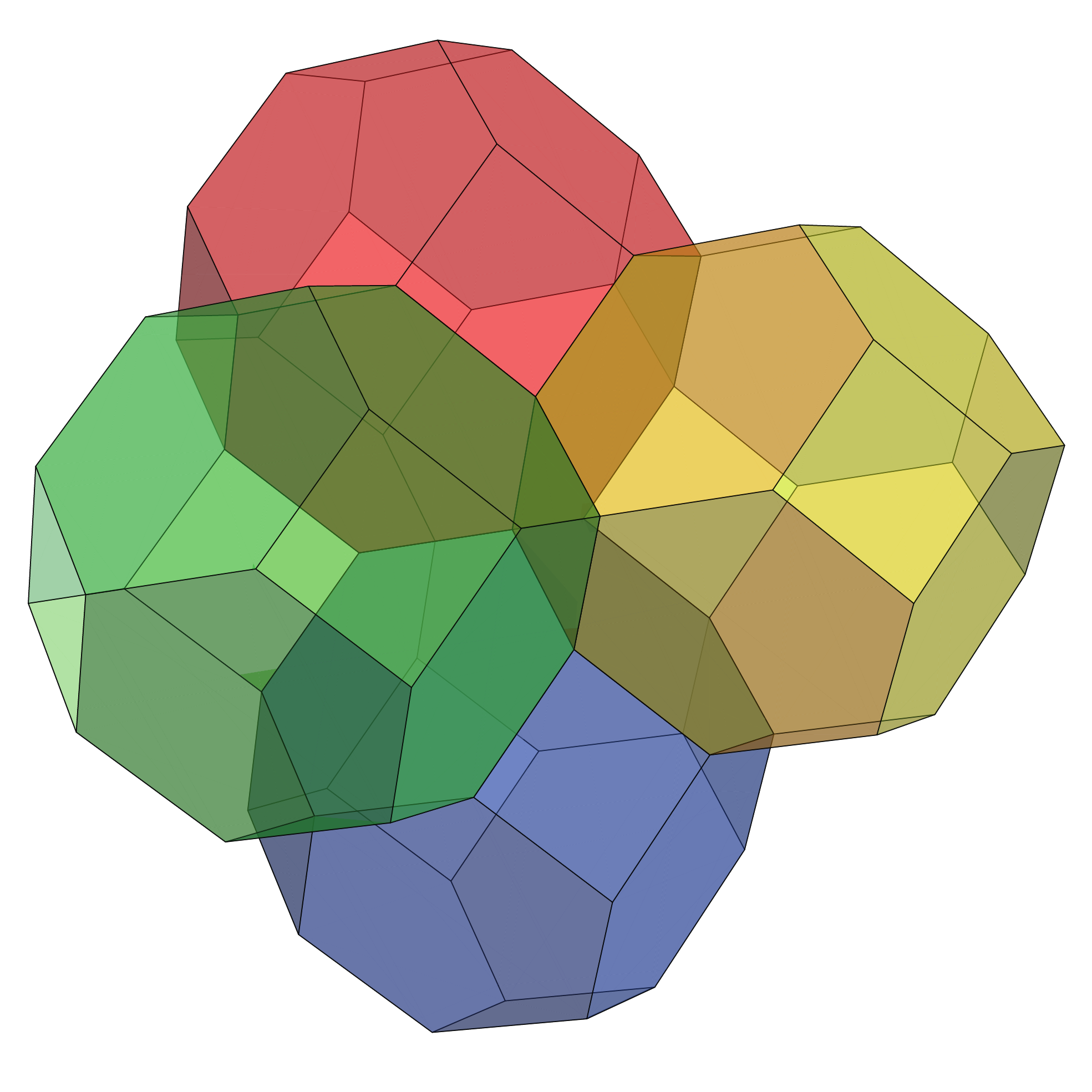} \centering
\caption{(Color online) To increase the distance of the code, we tile 3D space with truncated octahedron cells (with appropriate boundary conditions). Here we show a portion of the lattice. A red cell fits into the hollow formed by the green, blue and yellow cells towards the bottom of the figure, and thus the tiling proceeds, ensuring the 4-colorability of the lattice is preserved.}
\label{fig:truncated_octa}
\end{figure}

The method proceeds by carefully cutting a block of the lattice out, in order to form a tetrahedron, similar to that illustrated in figure~\ref{fig:tetra}(a). The main difference is that the new tetrahedron contains  more qudits and can have an arbitrarily large (though always odd, in order to ensure the $\omega$-commutation of the logical operators) number of qudits contained in its edges. The shape cut from the lattice may not look tetrahedral at first but may be deformed to a regular tetrahedron, with vertices belonging only to a single cell forming the four vertices of the tetrahedron.

In $\DD$ spatial dimensions a suitable tiling of the space as outlined at the beginning of section~\ref{sec:colour_codes} must be found. Besides these requirements placed on the lattice construction, the ``block'' (polytope) cut out of the lattice to form the higher-dimensional analogue of the tetrahedron must adhere to certain rules. 
\begin{enumerate}
\item The polytope formed must have $(\DD +1)$ boundaries, each of a different color. The color of the boundary corresponds to the color of the string-like operators that can end on that boundary.
\item Every edge of the polytope must contain an odd number of qudits. As stated above, this is to ensure the $\omega$-commutation of the logical operators is preserved.
\item There should be exactly $\DD$ vertices that belong only to a single cell of the lattice. These cells should be different colors and these vertices will form the vertices of the polytope. 
\end{enumerate}

\subsection{Error detection}

Errors are detected by measuring all the stabilizer generators. We shall not present a detailed analysis here, since errors and syndrome are related for these codes in a  similar way to the qubit case. For example, in 3 spatial dimensions a single $Z$-type error on an unstarred qudit will result in a measured eigenvalue of $\omega$ in the $X$ stabilizers corresponding to the four cells containing that vertex (see figure \ref{fig:tetra}). Similarly, a single $X$-type error on an unstarred qudit will result in a measured eigenvalue of $\omega$ in the six faces that contain the vertex (see figure \ref{fig:tetra}). In general, a $Z^k$ or $X^k$ error will lead to a measurement outcome  of $\omega^{\pm k}$. Syndromes arising from sets of multiple errors can be calculated using standard techniques.

A classical algorithm called a \textit{decoder} may be used to interpret the syndrome and infer the correction operator that must be applied to return the code to its original state. There are many examples of decoding algorithms for 2D qubit color codes that have been developed recently~\cite{Wang09,Katzgraber09,Ohzeki09,Andrist11,Delfosse14,Landahl11,Sarvepalli12}, but very little is known for the case of higher qudit and spatial dimensions. 

However, in the case of surface codes, it has been shown that some proposed renormalization group (RG) algorithms can be generalized to qudit codes in a straightforward manner~\cite{Anwar14, Poulin14}. Moreover, recent work for the 3D fault-tolerant implementation of the qudit surface code using a hard-decision RG decoder~\cite{Watson14} suggests that adaptations of such algorithms for higher spatial dimensions should be possible, however, the error correction thresholds would likely be degraded as the spatial dimension increases due to the larger stabilizer generators. 

We shall not present a decoder in this paper,  but remark that, unlike in qubit codes where neighboring errors lead to cancellation of syndromes in between, such cancellation becomes unlikely with increasing qudit dimension, and thus in general more information is available to the decoder. In fact, it has been shown that it is possible to design decoders to exploit this additional information to obtain higher thresholds with respect to qubit codes~\cite{Watson14}. 

For the case of gauge color codes, very recently a hard-decision RG decoder \cite{Brown15} was implemented for the qubit 3D construction. As with the color code, we expect a generalization of such an algorithm to gauge color codes in arbitrary spatial and qudit dimensions to be straightforward.

\section{\mo{} matrices and codes}\label{sec:m_matrices}

Brayvi and Haah introduced a powerful framework for defining codes with a transversal non-Clifford gate;  codes defined in terms of triorthogonal matrices. Here we introduce a significant generalisation to Bravyi and Haah's approach, extending to qudits codes with transversal gates from higher up the Clifford hierarchy. 

\subsection{Matrix representation of quantum codes} \label{sec:BH_triortho}

We begin by reviewing how Bravyi and Haah represent CSS codes with matrices~\cite{BravHaah12}. They use a matrix $G$ over $\mathbb{Z}_2$ that has linearly independent rows under modulo 2 arithmetic. The matrix $G$ is broken up into two blocks $G_1$ (with $k$ rows) and $G_0$ (with $s$ rows).  This defines a quantum code with $k$ logical qubits.  The elementary logical basis state $ \ket{0_{L}} $ is written
\begin{align}
    \ket{0_{L}} &= \frac{1}{\sqrt{2^s}} \sum_{f \in  \mathrm{span}[G_{0}]  }  \ket{f}\nonumber\\ 
    &= \frac{1}{\sqrt{2^s}} \sum_{y \in \mathbb{Z}_{2}^{s}  } \ket{ y^{T} \cdot G_0 }.  \label{eqn:log_0_state}
\end{align}
where ``$\cdot$'' denotes matrix multiplication modulo $\dd$, and we take advantage of the fact that the rows of $G_{0}$ are linearly independent to represent the terms in the superposition by a row vector $y$ transposed and multiplied with $G_0$. For the other logical computational basis states  $\ket{ x_{L} }$ where $x \in \mathbb{Z}_{2}^{k}$, we have
\begin{eqnarray}
    \ket{x_{L}} &=& \frac{1}{\sqrt{2^s}} \sum_{y \in \mathbb{Z}_{2}^{s}   } \ket{ y^{T}\cdot  G_{0} \oplus x^{T} \cdot G_{1}},  \nonumber\\
    \label{eqn:log_state}
\end{eqnarray}
where addition of row vectors is elementwise modulo 2 as acknowledged by the symbol $\oplus$.  Note that in the special case that $k=1$, $x$ is a scalar (or a $1\times 1$ row vector) and the transpose operation is trivial.  For qudits, quantum codes are defined by a matrix $G$ taking elements from $\mathbb{Z}_d$, so that
\begin{eqnarray}
    \ket{x_{L}} &=& \frac{1}{\sqrt{ d^s }} \sum_{y \in \mathbb{Z}_{d}^{s}   } \ket{ y^{T}\cdot  G_{0} \oplus x^{T} \cdot G_{1}}.  \nonumber\\
    \label{eqn:log_state2}
\end{eqnarray}
The key differences are that we now sum over all $y \in \mathbb{Z}_{d}^{s}$ and that arithmetic is modulo $d$.

In the CSS formalism, a quantum code has a stabilizer group which is generated by a product of two distinct generating sets, the $X$-stabilizer generators, which are tensor products of $X$ and $I$ alone and the $Z$-stabilizer generators, which are tensor products of $Z$ and $I$ alone. In addition, one must define logical operators, and in the CSS formalism, logical encoded ${X}$ operators consist of a tensor product of $X$ and (optionally) $I$ alone, and the logical encoded ${Z}$ consists of a tensor product of $Z$ and (optionally) $I$ alone. Once the $X$-stabilizer generators and logical $\bar{X}$ operator (or operators if there are multiple encoded qubits) are defined, the remaining $Z$-stabilizer generators are fixed by conjugation relations which can be elegantly captured via the use of dual codes from classical coding theory.  In the qudit setting, $G_0$ will again define the $X$-stabilizer generators for the code, via the mapping $k$ to $X^k$, e.g. the row vector $000123$ defines the operator $I\otimes I\otimes I\otimes X\otimes X^2 \otimes X^3$, and $G_1$ will similarly define the logical $X$ operators for the code.

\subsection{Defining \moity{}}

Bravyi and Haah~\cite{BravHaah12} prove that a so-called triorthogonal code supports a transversal non-Clifford gate in the 3rd level of the Clifford hierarchy.  We now generalize Bravyi and Haah's construction. First, we focus on generalising from qubits to qudits. We defined qudit color codes using a star-bipartition, and it is useful to incorporate that into the definition.  We introduce the star sign-flip matrix $F$. Each column of matrix $G$ corresponds to a qudit, and let us order these columns such that the first $p$ columns correspond to unstarred qudits, and the latter $n-p $ columns to starred qudits. We then define the  $n\times n$ matrix $F$:
\begin{equation}
F=\textrm{diag}(\underbrace{1,\ldots,1}_{\textrm{$|v_\bullet|$ entries}},\underbrace{-1,\ldots,-1}_{\textrm{$|v_\star|$ entries}})	,
\label{eqn:Fdef}
\end{equation}
where the first $|v_\bullet|$ elements on the diagonal are $1$ and the remaining elements on the diagonal are $-1$. We note that the ordering of the columns here is arbitrary, so we do not lose any generality by ordering the columns in this manner.
The purpose of $F$ will be to flip the sign of the entries of row vectors corresponding to the starred qudits, e.g. if $g=\{ [g]_1, [g]_2, \ldots , [g]_n \}$ is a  row vector, we define $g\cdot F$ as the row vector where $[gF]_j=[g]_j$ for the first $p$ elements (corresponding to unstarred qudits) and  $[gF]_j=-g_j$ for the remaining elements (corresponding to starred vertices).  We also define the weight $|\cdots|$ of a row vector as $| g | = \sum_j[ g ]_j$. 

We now will define a significant generalisation of triorthogonal matrices, which we call \moity{} (pronounced ``$\CD$ star orthogonality").
\begin{defin}
\label{mDefuniv}
An $n\times n$ matrix  $G$ over $\mathbb{Z}_{\dd}$ with a a bipartition of columns into $\{v_{\bullet}, v_{\star} \}$ is \mo{} if both of the following conditions hold.
    \begin{enumerate}
       \item The weight of every elementwise product of any $\CD$ rows of $G F$ (including repeated rows) is equal to  $0 $, except for the following case:
        \item For each row of submatrix $G_1 F$, the weight of the row vector raised to the $\CD$th power elementwise is $1 $.
    \end{enumerate}
where $F$ is defined as in equation~\eqref{eqn:Fdef} and where the matrices' columns are ordered to respect to the star-bipartition $\{v_{\bullet}, v_{\star} \}$.\end{defin}
Notice the above makes no mention of modular arithmetic. If we had instead defined the weights modulo $d$, we would have a weaker notion of \moity{}.  Color codes turn out to satisfy this stronger notion, and so this definition suffices for this paper.  We note that the results below rely on the stronger form to deal with the exceptional cases, e.g. $d=3,6$ for the $T$ gate (see App.~\ref{sec:m-ortho}).  We also remark on the weaker notion to clarify that for $d=2$ it exactly corresponds to the Brayvi-Haah definition of triorthogonality.  Other applications of weak \moity{} include quantum Reed-Muller codes~\cite{Campbell12,Campbell14}.


A more symbolic statement of \moity{} will prove useful in subsequent sections. We make use of of $u \circ v$ symbol to denote element-wise products of vectors $u$ and $v$, such that it has elements
\begin{equation}
	 [g_a \circ g_b ]_{j}  =  [g_{a}]_j [g_{b}]_j ,
	 \label{eqn:circle_prod}
\end{equation}
which generalizes for an arbitrary number of vectors, e.g. for three vectors
\begin{equation}
	 [g_a \circ g_b \circ g_c ]_{j}  =  [g_{a}]_j [g_{b}]_j [g_{c}]_j  .
\end{equation}
Let us now consider a reformulation of points \textit{1} and \textit{2} of Def.~\ref{mDefuniv}.  For any set of $\CD$ rows, we have a list of $\CD$ row indices $\{ a,b,\ldots, y \}$.  \moity{} demands that for all the 
\begin{equation}
  \sum_j   [g_a F \circ g_b F \circ \ldots \circ g_y F ]_{j}   = 0 
\end{equation}
unless all rows are identical and come from $G_1$, in which case the weight is unity.

\subsection{The \moity{} of qudit color codes}
\label{moitySec}

The concept of \moity{} is a powerful tool for studying the properties of qudit color codes due to the following lemma

\begin{lm}\label{Colorcode_morth_lemma}
A qudit color code for any $\dd\ge 2$ defined on a lattice in spatial dimension $\DD$ where $X$-stabilizer generators are defined by $\DD^{\prime}$-cells with $\DD^{\prime}\le\DD$, and
with a star-bipartition $\{v_{\bullet}, v_{\star} \}$  defined by the bipartition of the lattice, is a \mo{} code for all $\CD\le\DD^{\prime}$.
\end{lm}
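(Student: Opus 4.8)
# Proof Proposal

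The plan is to verify the two conditions of Definition~\ref{mDefuniv} directly from the combinatorial structure of the colex, using Lemma~\ref{StarCellLem} (and Corollary~\ref{cor_punc}) as the workhorse. The matrix $G$ for the color code has rows given by the $X$-type stabilizer generators $S_{X,C}$ (these form $G_0$) together with the logical $\bar X$ operator (this is the single row of $G_1$), where each row is the indicator vector of the vertex set of the corresponding $\DD^{\prime}$-cell, or all of $v_\bullet\cup v_\star$ in the logical case. Since the code is CSS with all $X$-type generators having entries in $\{0,1\}$, the entries of $G$ are $0$ or $1$, so element-wise products of rows are again $0/1$-valued: $[g_{C_1}\circ\cdots\circ g_{C_m}]_j = 1$ iff vertex $j$ lies in the intersection $C_1\cap\cdots\cap C_m$ of the associated cells. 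After applying $F$, this entry becomes $\pm1$ according to whether $j$ is unstarred or starred.

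\textbf{Condition 1.} Take any multiset of $\CD\le\DD^{\prime}$ rows of $GF$, associated to cells $C_1,\dots,C_\CD$ (with repeats allowed, and possibly including the logical row, which corresponds to the whole lattice/polytope). Their element-wise product, summed over $j$, equals $|I\cap v_\bullet| - |I\cap v_\star|$ where $I = C_1\cap\cdots\cap C_\CD$ is the common intersection of the cells. Because repeated cells contribute trivially, $I$ is itself an intersection of at most $\CD\le\DD^{\prime}$ distinct cells of the colex; standard colex geometry (the same facts used to prove that stabilizers commute, and extended in Lemma~\ref{StarCellLem} part~2) says such an intersection is either empty or is a $k$-cell of the lattice with $k>0$ — here is where the hypothesis $\CD\le\DD^{\prime}$ is essential, since it guarantees the intersection is a genuine lower-dimensional cell rather than dropping to dimension $0$. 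In either case Lemma~\ref{StarCellLem}(2) (or its puncturing version, Corollary~\ref{cor_punc}(2)) gives $|I\cap v_\bullet| = |I\cap v_\star|$, so the weighted sum vanishes. If the $\CD$ rows are not all the same logical row, this argument applies and the weight is $0$, as required.

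\textbf{Condition 2.} If all $\CD$ rows are the logical row $g\in G_1F$, the intersection is the whole punctured polytope, whose vertex set is $v_\bullet\cup v_\star$ with $|v_\bullet| = |v_\star|+1$ by Corollary~\ref{cor_punc}(1). Raising $g$ to the $\CD$th power element-wise gives $1$ on unstarred vertices and $(-1)^\CD$ on starred vertices; but one still needs the result to be $1$, which is why the construction requires the relevant counts to work out — concretely, summing gives $|v_\bullet| + (-1)^\CD |v_\star|$. When $\CD$ is even this is $|v_\bullet|+|v_\star| = 2|v_\star|+1$, which is \emph{not} $1$ in general; so in fact the element-wise $\CD$th power of a $0/1$ vector is just the vector itself regardless of parity, and the correct reading is that $[g F]_j^\CD = ([gF]_j)^\CD$ equals $[gF]_j$ when entries are $0/\pm1$ and $\CD$ is odd, or $|[gF]_j|$ when $\CD$ is even; the intended statement is that the weight is $|v_\bullet|-|v_\star| = 1$, matching condition~1's pattern. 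I would phrase this as: the $\CD$th element-wise power of $gF$ has weight $|v_\bullet|-|v_\star|+(\text{correction if }\CD\text{ even})$, and then note that the exceptional even-$\CD$ cases are exactly handled by the stronger (non-modular) form of the definition as flagged in the remark after Definition~\ref{mDefuniv} and in the appendix; for the generic case the weight is $1$.

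\textbf{Main obstacle.} The essential geometric input is the claim that the intersection of up to $\DD^{\prime}$ cells (of the dimension carrying $X$-stabilizers) is either empty or a positive-dimensional subcell to which Lemma~\ref{StarCellLem}(2) applies — in other words, that one never needs the $k=0$ case. This is where the bound $\CD\le\DD^{\prime}$ is used and is the step most deserving of care; in the dual picture it amounts to saying that $\DD^{\prime}$ simplices sharing a common face share a face of dimension $\ge 1$, which follows by the same iterated-dual argument used in the proof of Lemma~\ref{StarCellLem}. The second delicate point is the bookkeeping for the exceptional dimensions ($d=3,6$ and even $\CD$), where I would simply invoke the stronger integer-valued notion of \moity{} and defer the detailed check to the appendix, since the statement of the lemma is about \moity{} in that strong sense.
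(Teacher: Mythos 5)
Your proof follows the paper's argument closely: represent each elementwise row-product as the $F$-signed indicator of the intersection $I$ of the corresponding cells, use the colex property that an intersection of $q\le\DD'$ of the $\DD'$-cells is either empty or a cell of positive dimension, and apply Corollary~\ref{cor_punc}(2) to get $|I\cap v_\bullet|-|I\cap v_\star|=0$, with Corollary~\ref{cor_punc}(1) supplying the weight-$1$ case when all $\CD$ rows are the all-ones row. The paper organizes this slightly differently --- it proves $\DD'^{\star}$-orthogonality and then descends to all $\CD<\DD'$ via Lemma~\ref{Colorcode_morth_sublemma} --- whereas you argue each $\CD$ directly; the two are equivalent.

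Where you go astray is in the closing paragraphs. You are right to observe that Definition~\ref{mDefuniv}, read literally as ``$\CD$ rows of $GF$,'' introduces a factor $F^{\CD}$, which for even $\CD$ collapses to the identity and would give weight $|I|$ rather than $|I\cap v_\bullet|-|I\cap v_\star|$. The paper's own proofs (here and in App.~\ref{sec:m-ortho}) consistently compute $|(g_{a_1}\circ\cdots\circ g_{a_{\CD}})\cdot F|$ with a single $F$, so that is the intended reading, and with it your Condition~2 computation yields $|v_\bullet|-|v_\star|=1$ cleanly for every $\CD$. But you then fold this parity question in with ``the exceptional dimensions $d=3,6$'' and attribute its resolution to the integer-valued (non-modular) form of the definition. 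Those are unrelated issues: the integer-valued weights are needed because when $\CD!\equiv 0\pmod d$ (e.g.\ $d=3,6$ with $\CD=3$), Lemma~\ref{rlemma2} no longer certifies a non-Clifford gate, and the transversality argument must lift to $\gamma$ with $\gamma^{3}=\omega$ and work mod $3d$ (App.~\ref{Sec:triortho_3_and_6}). That arithmetic subtlety has nothing to do with the parity of $\CD$ or with how many copies of $F$ appear. The core of your argument is nonetheless the same as the paper's and is correct once the single-$F$ reading is adopted throughout, including in your Condition~1 step.
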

To prove this, let us first prove a convenient lemma.
\begin{lm}\label{Colorcode_morth_sublemma}
Any \mo{} matrix $G$ which includes, as one of its rows, the all-ones vector is also $\CD^{\prime \star}$-orthogonal for all $\CD^{\prime}<\CD$.
\end{lm}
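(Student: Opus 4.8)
The plan is to deduce $\CD'^{\star}$-orthogonality from $\CD^{\star}$-orthogonality by a padding argument: since $G$ has the all-ones row, any multiset of $\CD'$ rows can be completed to a multiset of $\CD$ rows by adjoining $\CD-\CD'$ copies of the all-ones row, and the hypothesis on $G$ then bounds the weight of the resulting elementwise product, which I relate back to the weight of the original product. The two elementary facts that drive everything are: writing $\mathbf{1}$ for the all-ones row, we have $[\mathbf{1}F]_j = F_{jj}\in\{+1,-1\}$, so $(\mathbf{1}F)^{\circ 2}$ is the all-ones vector and more generally $(\mathbf{1}F)^{\circ k}$ equals the all-ones vector when $k$ is even and equals $\mathbf{1}F$ when $k$ is odd; and the all-ones vector is the identity for the $\circ$-product of equation~\eqref{eqn:circle_prod}.

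First I would set up the bookkeeping. Fix $\CD'<\CD$ and a multiset of rows $g_{a_1},\dots,g_{a_{\CD'}}$ of $G$. Adjoining $\CD-\CD'$ copies of $\mathbf{1}$ gives a multiset of $\CD$ rows, and using $[\mathbf{1}]_j=1$ one has
\begin{equation}
g_{a_1}F\circ\cdots\circ g_{a_{\CD'}}F\circ (\mathbf{1}F)^{\circ(\CD-\CD')}
=\big(g_{a_1}F\circ\cdots\circ g_{a_{\CD'}}F\big)\circ(\mathbf{1}F)^{\circ(\CD-\CD')},
\end{equation}
which by the facts above equals $g_{a_1}F\circ\cdots\circ g_{a_{\CD'}}F$ when $\CD-\CD'$ is even, and equals that same vector with the sign of every starred coordinate flipped when $\CD-\CD'$ is odd. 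Now apply point \textit{1} of Definition~\ref{mDefuniv} at level $\CD$ to the padded multiset: if the $g_{a_i}$ are not all equal (or not all from $G_1$), the padded multiset is likewise not of the exceptional form, so the padded product has weight $0$; combining this with the displayed identity gives that $g_{a_1}F\circ\cdots\circ g_{a_{\CD'}}F$ has weight $0$, which is point \textit{1} of Definition~\ref{mDefuniv} at level $\CD'$. The exceptional case (point \textit{2}) I would handle by comparing $\sum_j [hF]_j^{\CD'}$ with $\sum_j [hF]_j^{\CD}$ for a row $h$ of $G_1$, using that the presence of the all-ones row together with $\CD^{\star}$-orthogonality already constrains the parity of $\CD$, so the two exponents behave identically on the relevant vectors and the weight is again $1$.

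The step I expect to be the main obstacle is the interaction of the sign-flip matrix $F$ of equation~\eqref{eqn:Fdef} with the parity of $\CD-\CD'$: adjoining an odd number of copies of the all-ones row reflects the product vector on its starred half, so one must show this reflection cannot convert a weight-$0$ vector into one of nonzero weight — equivalently, that the starred and unstarred parts of the product each already have weight zero. This is precisely where the \emph{strong} (non-modular) form of \moity{} is needed rather than the weak one, and where the star-bipartition combinatorics of Lemma~\ref{StarCellLem} (equal numbers of starred and unstarred vertices in every cell, hence in the relevant intersections) does the real work; the remaining case $\CD-\CD'$ even is immediate from the displayed identity.
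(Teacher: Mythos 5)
Your padding plan is the paper's proof, but the parity obstruction you flag as ``the main obstacle'' is an artifact of reading the weight as $\sum_j [g_{a_1}F\circ\cdots\circ g_{a_{\CD}}F]_j$, which applies $F$ once per factor and hence collapses to $F^{\CD}$ on the product. The \moity{} conditions as the paper actually \emph{uses} them apply $F$ once to the elementwise product of rows of $G$, i.e.\ the weight is $\sum_j [(g_{a_1}\circ\cdots\circ g_{a_{\CD}})\cdot F]_j$ --- see the $(g_a\circ g_b\circ g_c)\cdot F$ form in equation~\eqref{EqnLongG}, the $\left|g_{\vec{a}}\cdot F\right|$ in the appendix, and the phrase ``multiply the product row vector by $F$'' in the proof of Lemma~\ref{Colorcode_morth_lemma}. (Your reading and this one coincide when $\CD$ is odd, but only the single-$F$ form makes point \textit{2} of Definition~\ref{mDefuniv} sensible for all $\CD$: for $\CD$ even your reading gives $\sum_j F_{jj}^{\CD}=n$ rather than $1$.) Under the single-$F$ reading there is nothing to fix: adjoining copies of $\mathbf{1}$ leaves the elementwise product literally unchanged because $\mathbf{1}$ is the $\circ$-identity, the one trailing $F$ does not care how many factors were multiplied, and the parity of $\CD-\CD'$ never enters. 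That \emph{is} the paper's one-sentence proof.

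Your proposed repair of the odd-parity case does not close the gap. First, it is methodologically out of place: Lemma~\ref{Colorcode_morth_sublemma} is a purely combinatorial statement about $\CD^{\star}$-orthogonal matrices, invoked in the proof of Lemma~\ref{Colorcode_morth_lemma} precisely so that the cell-intersection argument need be run only once, at the top level $\CD=\DD'$; importing Lemma~\ref{StarCellLem} into the sublemma makes the argument circular and restricts it to color-code matrices. Second, the fact you would need --- that the starred and unstarred parts of $g_{a_1}F\circ\cdots\circ g_{a_{\CD'}}F$ each separately have weight zero --- is false for color codes: for an even number of $0$-$1$ rows of $G_0$ that product is simply the indicator of a nonempty cell intersection, and each half has weight equal to the (positive) number of starred, resp.\ unstarred, vertices it contains. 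The bipartition makes the \emph{signed} sum vanish, not each part. Switching to the single-$F$ form throughout dissolves both issues and your outline then closes immediately.
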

\begin{proof}
This follows since an elementwise product of $\CD$ vectors including the all-ones vector is equal to the elementwise product of $\CD-1$ vectors excluding the all-ones vector. 
\end{proof}

Since the logical $X$ for all color codes is the transversal $X$ acting on all qudits, the matrix $G_{1}$ for these codes contains the all-ones row. Hence to prove lemma~\ref{Colorcode_morth_lemma} we now only need to prove that a color code in $\CD$ spatial dimensions is \mo{}.

\begin{proof}
Recall that the $X$-generators of the code are defined by $\DD^{\prime}$-cells. Taking as an example the 3D lattice in figure~\ref{fig:tetra}, the $X$-stabilizer generators are defined by 3-cells. Consider the following geometric properties of these cells. When $q$ distinct $\DD^{\prime}$-cells intersect non-trivially, where they meet defines a cell of smaller dimension.  For a general lattice, there is no further restriction on the dimension of this cell.   However, for a $\DD$-colex it is well known~\cite{Bombin13,Bombin06,Bombin07a,Kubica14} that the intersection of $q$ objects of dimension $\DD^{\prime}$ yields either an empty set or a cell of dimension $\DD^{\prime} - q +1$.  For instance, the intersect of two neighboring $\DD^{\prime}$-cells defines a $(\DD^{\prime}-1)$-cell, where three such $\DD^{\prime}$-cells meet defines a $(\DD^{\prime}-2)$-cell, and so on. Where $\DD^{\prime}$ $\DD^{\prime}$-cells meet defines a 1-cell, or lattice edge. For example, in 3D, two adjacent 3-cells meet at a face, and three adjacent 3-cells meet at an edge (four adjacent 3-cells meet at a point). 

We use this geometric fact to prove the theorem. Any product of $\CD$ row vectors in $G_0$  has a geometric representation as the intersection of the vertices of these cells.  This corresponds to either an empty cell or cell of dimension no less than $\DD-\CD+1$.  Each cell in the lattice, of any dimension greater than zero, has an equal number of starred and unstarred vertices (see Cor.~\ref{cor_punc}). When we multiply the product row vector by $F$ we invert the sign of the columns corresponding to the starred vertices, but this corresponds to precisely half of the non-zero elements in the vector. Hence the weight of the resultant vector is zero provided $\DD-m+1 > 0$.  Therefore, the rows of $G_0$ satisfy the conditions for \moity{} whenever $\CD \leq \DD$.

The other case to consider is the product where all $\CD$ vectors in the product are the all-ones vector. The element-wise product results trivially in the all-ones vector. The number of unstarred qudits is one more than the starred qudits (see Cor.~\ref{cor_punc}), hence the weight of this vector after multiplying by $F$ is 1. Together with lemma~\ref{Colorcode_morth_sublemma} this proves lemma~\ref{Colorcode_morth_lemma}.
\end{proof}

Note that the bounds in this lemma are tight. A code whose $X$-stabilizer generators are defined by $\DD^{\prime}$-cells is not $(\DD^{\prime}+1)^{\star}$-orthogonal because $(\DD+1)$ $\DD^{\prime}$-cells meet at a point, and a single vertex represents a vector which does not have zero weight.

\section{Transversal operators} \label{sec:transversal}

Our attention now turns to transversal gates. The qubit color codes are the only family of topological codes known to support transversal gates satisfying the Bravyi-K\"onig bound. We prove in this paper that (apart from a minority of special cases which need to be treated individually) the qudit codes we have defined also saturate this bound. 

\begin{thm}
\label{thm:mainresult}
The qudit color codes on a $\DD$-colex, where $X$-stabilizer generators are defined by $\DD^{\prime}\le \DD$ cells and where  $\DD^{\prime}!\neq 0 \pmod{d}$, support transversal gates in the $\DD^{\prime}$th level of the Clifford Hierarchy (CH) (which are  not in the $(\DD-1)$th level of the CH), saturating the Bravyi-K\"onig bound.
\end{thm}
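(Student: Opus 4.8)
The plan is to prove the theorem constructively: for an admissible $\DD^{\prime}$-colex I will exhibit a star-conjugate transversal gate of the form~\eqref{eqn:U_trans} and show that it acts on the single encoded qudit exactly as a phase gate $R_{f_{\DD^{\prime}}}$ of the type~\eqref{Eq:QuditR} built from a polynomial of degree $\DD^{\prime}$. Lemmas~\ref{rlemma1} and~\ref{rlemma2} then pin this logical gate to the $\DD^{\prime}$th level of the Clifford hierarchy and no lower, and the choice $\DD^{\prime}=\DD$ yields a transversal logical gate at exactly the $\DD$th level on a $\DD$-dimensional code, which is the Bravyi--K\"onig ceiling. The candidate is $R=R_{f_{\DD^{\prime}}}$ with $f_{\DD^{\prime}}(j)=j^{\DD^{\prime}}$, a choice available precisely because $\DD^{\prime}!\neq 0\pmod{\dd}$ forces its leading coefficient $a_{\DD^{\prime}}=1$ to satisfy $\DD^{\prime}!\,a_{\DD^{\prime}}\neq 0\pmod{\dd}$; one applies $\tilde R=R[v_\bullet]\otimes R^{*}[v_\star]$, which is a tensor product of single-qudit gates and hence transversal.

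The core step is to compute the action of $\tilde R$ on the logical basis states written in the matrix form~\eqref{eqn:log_state2}. For a color code, $G_1$ is the single all-ones row --- because $\bar X=X[v_\bullet]\otimes X^{*}[v_\star]$ is the all-$X$ operator, whose exponent vector is $(1,\ldots,1)$ --- while the rows of $G_0$ are the $0/1$ indicator vectors of the $\DD^{\prime}$-cells. Acting on a term $\ket{g}$ with $g=y^{T}\cdot G_0\oplus x\,g^{(1)}$, the gate $\tilde R$ multiplies it by $\omega$ raised to the weight $|g^{\circ\DD^{\prime}}F|$, where $g^{\circ\DD^{\prime}}$ is the entrywise $\DD^{\prime}$th power of $g$ (the $\circ$-product of~\eqref{eqn:circle_prod} repeated $\DD^{\prime}$ times) and $F$, from~\eqref{eqn:Fdef}, flips the signs of the starred columns. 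Expanding $g^{\circ\DD^{\prime}}$ multinomially in the $\circ$-product, every term that carries at least one factor $y_i g_i^{(0)}$ reduces, after absorbing the harmless $(g^{(1)})^{\circ b}=g^{(1)}$ factors via Lemma~\ref{Colorcode_morth_sublemma}, to a weight of the form $|g_{i_1}^{(0)}\circ\cdots\circ g_{i_a}^{(0)}\,F|$ with $1\le a\le\DD^{\prime}$, which vanishes because the code is \moity{} for every $\CD\le\DD^{\prime}$ (Lemma~\ref{Colorcode_morth_lemma}). The only surviving contribution is the pure $x^{\DD^{\prime}}(g^{(1)})^{\circ\DD^{\prime}}$ term, giving $x^{\DD^{\prime}}|g^{(1)}F|=x^{\DD^{\prime}}(|v_\bullet|-|v_\star|)=x^{\DD^{\prime}}$ by Corollary~\ref{cor_punc}. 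Hence the phase is $\omega^{x^{\DD^{\prime}}}$, independent of $y$, so $\tilde R$ preserves the code space and acts logically as $\sum_x\omega^{x^{\DD^{\prime}}}\ket{x}\bra{x}=R_{f_{\DD^{\prime}}}$, with no Pauli or Clifford correction required. Repeating the computation with an arbitrary degree-$\DD^{\prime}$ polynomial in place of the monomial shows that $\tilde R_{f_{\DD^{\prime}}}$ always induces logically $R_{f_{\DD^{\prime}}}$, so one obtains a whole family of such gates.

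It then remains only to classify the induced gate. By Lemma~\ref{rlemma1}, $R_{f_{\DD^{\prime}}}$ lies in the $\DD^{\prime}$th level of the Clifford hierarchy; since $\DD^{\prime}!\,a_{\DD^{\prime}}=\DD^{\prime}!\neq 0\pmod{\dd}$, Lemma~\ref{rlemma2} shows it is not in the $(\DD^{\prime}-1)$th level. Specialising to $\DD^{\prime}=\DD$, the largest value the lattice permits, gives a transversal logical gate lying exactly in the $\DD$th level of the hierarchy on a code realised in $\DD$ spatial dimensions, which is the maximal level permitted by the Bravyi--K\"onig bound; hence the bound is saturated.

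The step I expect to need the most care is the multinomial expansion above: one must account for the terms with repeated row indices, which collapse --- because the $G_0$ rows are $0/1$-valued --- to $\circ$-products of at most $\DD^{\prime}$ \emph{distinct} rows of $G_0$, and one must check that \moity{} (or its lower-order consequences) is invoked in exactly the right instance for each term, taking care that the column label $[g]_j$ is read modulo $\dd$ while the expansion is carried out over the integers. The dimensions excluded by the hypothesis, where $\DD^{\prime}!\equiv 0\pmod{\dd}$ --- in particular $\DD^{\prime}=3$ with $\dd\in\{2,3,6\}$ --- fall outside the present theorem; as with the $T_{3,6}$ gate of~\eqref{Eq:QuditT36}, they can be handled by passing to a higher root of unity and invoking the stronger, non-modular form of \moity{}, which we treat separately.
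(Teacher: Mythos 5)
Your proposal is correct and follows essentially the same route as the paper: establish that a $\DD$-colex gives an $\CD^{\star}$-orthogonal matrix for all $\CD\le\DD'$ (Lemma~\ref{Colorcode_morth_lemma}), expand $\lvert g^{\circ\DD'}F\rvert$ multinomially over the rows of $G$ to show that all cross terms vanish by \moity{} while the pure $G_1$ term survives with weight $1$, and then invoke Lemmas~\ref{rlemma1} and~\ref{rlemma2} to place the induced logical $R_{f_{\DD'}}$ exactly at level $\DD'$. The only cosmetic difference is that you inline the expansion for color codes directly (using $G_1$ being the all-ones row and Corollary~\ref{cor_punc}), whereas the paper factors this into a general statement about \mo{} codes (Lemma~\ref{mtranslemma}, App.~\ref{sec:m-ortho}) and then specializes.
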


The qualification $\DD^{\prime}!\neq 0 \pmod{d}$ is due to lemma~\ref{rlemma2}, namely that when $\DD^{\prime}!= 0 \pmod{d}$, gates of the form of equation~\eqref{Eq:QuditR} are in the  $(\DD^{\prime}-1)$th level of the CH (and therefore does not saturate Bravyi-K\"onig) as discussed in Sec.~\ref{sec:stabilizer}. 

For the most important $\DD=3$ case, however, we additionally prove transversality for the exceptional $\dd=3$ and $\dd=6$ cases, and can thus state the unqualified result:

\begin{thm}
\label{thm:mainresultTgates}
The qudit color codes on a $\DD$-colex, where $X$-stabilizer generators are defined by $\DD^{\prime}\le \DD$ cells support transversal non-Clifford gates in the $3$rd level of the Clifford Hierarchy (CH) for all $\DD^{\prime}\ge 3$.
\end{thm}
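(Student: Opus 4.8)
The plan is to exhibit, for every $\DD^{\prime}\ge 3$, a concrete transversal gate whose logical action is a non-Clifford gate in the third level of the Clifford hierarchy, namely the star-conjugate transversal $T$ gate $\tilde T=T[v_\bullet]\otimes T^{*}[v_\star]$ in the notation of~\eqref{eqn:U_trans}, where $T$ is the cubic-phase gate of~\eqref{Eq:QuditT} when $\dd\notin\{3,6\}$ and the gate $T_{3,6}$ of~\eqref{Eq:QuditT36} when $\dd\in\{3,6\}$. Being a tensor product of single-qudit unitaries, $\tilde T$ is transversal by construction. The only structural input is Lemma~\ref{Colorcode_morth_lemma}: since the hypothesis is $3\le\DD^{\prime}\le\DD$, the code is $\CD^{\star}$-orthogonal for all $\CD\le\DD^{\prime}$, in particular for $\CD=3$. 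Given this, it remains to show (i) that $\tilde T$ preserves the code space and implements the logical cubic-phase gate $\bar T=\sum_x\omega^{x^{3}}\ket x\bra x$ (respectively $\sum_x\gamma^{x^{3}}\ket x\bra x$), and (ii) that this logical gate lies in the third level of the hierarchy but not in the second. Part (ii) is immediate from Lemmas~\ref{rlemma1} and~\ref{rlemma2} applied to $f_{3}(x)=x^{3}$ when $\dd\nmid 6$, and from the remarks following~\eqref{Eq:QuditT36} for the cases $\dd\in\{3,6\}$.

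For part (i), act with $\tilde T$ on the logical basis state $\ket{x_L}$ of~\eqref{eqn:log_state2}. On a basis vector $\ket g$ of the superposition, with $g=y^{T}\!\cdot G_0\oplus x^{T}\!\cdot G_1$, the gate contributes the phase $\omega^{\sum_j [F]_{jj}[g]_j^{3}}$, where $[F]_{jj}=\pm1$ records the conjugation on starred qudits via~\eqref{eqn:Fdef}. Write $[g]_j$ as the value of the integer combination $\sum_i c_i[g_i]_j$ of the rows $g_i$ of $G_0$ (with coefficients $y_i$) and the all-ones row of $G_1$ (with coefficient $x$); since $\omega$ only sees the exponent modulo $\dd$, the phase is unchanged by this replacement. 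Expanding the cube, $\left(\sum_i c_i[g_i]_j\right)^{3}=\sum_i c_i^{3}[g_i]_j^{3}+3\sum_{i\neq i'}c_i^{2}c_{i'}[g_i]_j^{2}[g_{i'}]_j+6\sum_{i<i'<i''}c_i c_{i'} c_{i''}[g_i]_j[g_{i'}]_j[g_{i''}]_j$, and summing against $[F]_{jj}$ over $j$, every coefficient becomes the weight of an elementwise product of three rows of $GF$ with repetitions allowed (using $[F]_{jj}^{3}=[F]_{jj}$). By Definition~\ref{mDefuniv} each such weight vanishes, the sole exception being three copies of the all-ones row of $G_1$, whose weight is $\sum_j[F]_{jj}=|v_\bullet|-|v_\star|=1$ by Corollary~\ref{cor_punc}. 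Hence $\sum_j[F]_{jj}[g]_j^{3}=x^{3}$ independently of $y$, so $\tilde T\ket{x_L}=\omega^{x^{3}}\ket{x_L}$: the code space is preserved and the logical action is $\bar T$.

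The genuine obstacle is the exceptional pair $\dd\in\{3,6\}$. There $x^{3}\equiv x\pmod\dd$, so $\omega^{x^{3}}$ collapses to logical $Z$, which is Clifford; one must instead use the $3\dd$-th root $\gamma$ of~\eqref{Eq:QuditT36}, and this forces the cube identity $\sum_j[F]_{jj}[g]_j^{3}=x^{3}$ to be tracked modulo $3\dd$ rather than modulo $\dd$. This is exactly why Definition~\ref{mDefuniv} uses the \emph{strong} form of $\CD^{\star}$-orthogonality, with weights equal to $0$ and $1$ as integers and not merely modulo $\dd$: the multinomial expansion above is an identity over $\mathbb Z$, and the exact vanishing of the cross-term weights established in the proof of Lemma~\ref{Colorcode_morth_lemma} makes the identity $\sum_j[F]_{jj}[g]_j^{3}=x^{3}$ hold over $\mathbb Z$. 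Two small points then close this case: first, $\gamma^{3\dd}=\omega^{\dd}=1$, so the integer-valued exponent determines $\gamma^{\sum_j[F]_{jj}[g]_j^{3}}$ completely; second, since $3\mid\dd$ one has $a\equiv b\pmod\dd\Rightarrow a^{3}\equiv b^{3}\pmod{3\dd}$, so replacing each $[g]_j$ by its integer representative leaves $\gamma^{[F]_{jj}[g]_j^{3}}$ unchanged. Combining these gives $\tilde T\ket{x_L}=\gamma^{x^{3}}\ket{x_L}$, so $\tilde T$ implements $\bar T_{3,6}$, which is non-Clifford and in the third level.

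I expect the fiddliest part of the write-up to be the bookkeeping of the multinomial expansion: one must match each monomial $c_i^{3}[g_i]_j^{3}$, $c_i^{2}c_{i'}[g_i]_j^{2}[g_{i'}]_j$, and $c_ic_{i'}c_{i''}[g_i]_j[g_{i'}]_j[g_{i''}]_j$ to the correct ``product of three rows of $GF$'' so that Definition~\ref{mDefuniv} applies verbatim, and keep track of which index is the distinguished all-ones row of $G_1$. The accompanying modular-arithmetic check for $\gamma$ is short but cannot be skipped, since it is the only place the divisibility $3\mid\dd$ (valid exactly for $\dd=3,6$) is used. Everything else is a routine specialisation of the argument already used to prove Theorem~\ref{thm:mainresult}.
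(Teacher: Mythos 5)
Your proof is correct, and for the generic case $\dd\notin\{3,6\}$ it follows the paper's own argument (Lemmas~\ref{Colorcode_morth_lemma}, \ref{rlemma1}, \ref{rlemma2} plus the multinomial expansion appearing in the proof of Lemma~\ref{tgatelemma}). Where you genuinely diverge is the treatment of the exceptional pair $\dd\in\{3,6\}$. The paper's Appendix~\ref{Sec:triortho_3_and_6} writes each reduced coordinate explicitly as $a-\lfloor a/\dd\rfloor\dd$, expands the triple product of these differences, and verifies term by term that every cross term carries a prefactor divisible by $3\dd$ (factors of $9,27$ for $\dd=3$, factors of $18,108,216$ for $\dd=6$) before invoking strong $3^\star$-orthogonality on the surviving term. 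You instead establish the integer identity $\sum_j[F]_{jj}\bigl(\sum_i c_i[g_i]_j\bigr)^3=x^3$ directly from the integer-valued vanishing of the cross-term weights, and then reconcile this with the physically relevant reduced coordinates via the short observation that $3\mid\dd$ and $a\equiv b\pmod\dd$ together force $a^3\equiv b^3\pmod{3\dd}$ (since $a\equiv b\pmod 3$ makes $a^2+ab+b^2\equiv 0\pmod 3$, and $3\mid\dd$ gives $3\dd\mid\dd^2$). Both routes hinge on the \emph{strong}, integer-valued form of $\CD^{\star}$-orthogonality in Definition~\ref{mDefuniv}; the paper's floor-function expansion is more explicit but requires separate bookkeeping for $\dd=3$ and $\dd=6$, whereas your divisibility lemma handles both uniformly, is shorter, and makes transparent exactly why the hypothesis $3\mid\dd$ --- i.e.\ precisely the cases $\dd\in\{3,6\}$ for which the plain $T$ gate degenerates --- is what rescues the argument.
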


We believe that the qualification in theorem~\ref{thm:mainresult} can be removed, following a similar approach to proving theorem~\ref{thm:mainresultTgates}, but we leave this for future work.

To prove theorem~\ref{thm:mainresult}, we show that gates of the form of equation~\eqref{Eq:QuditR} are transversal in all \mo{} codes for polynomial degree $r=\CD$.  To prove theorem~\ref{thm:mainresultTgates} we also prove that gates of the form in equation~\eqref{Eq:QuditT36} are transversal in $3^{\star}$-orthogonal codes. The theorems then follow by virtue of lemmas~\ref{rlemma1}, \ref{rlemma2} and~\ref{Colorcode_morth_lemma}, together with the known results for qubit color codes~\cite{Bombin07}. In particular, the transversal gates we consider are of star-conjugate transversal form, see equation~\ref{eqn:U_trans}.

\begin{lm}\label{mtranslemma}
A \mo{} code has a transversal implementation of the unitary gate $R_{f_{\CD}}$ such that:
\begin{equation}
\tilde{R}_{f_{\CD}} = R_{f_{\CD}}[v_{\bullet}] \otimes R_{f_{\CD}}^{*}[v_{\star}]
\label{eqn:R_trans}
\end{equation}
implements $R_{f_{\CD}}$ on the code space. 
\end{lm}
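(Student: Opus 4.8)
The plan is to check directly that the diagonal operator $\tilde{R}_{f_{\CD}}$ of equation~\eqref{eqn:R_trans} reproduces $R_{f_{\CD}}$ on each logical basis state $\ket{x_L}$ of equation~\eqref{eqn:log_state2}. First I would record its action on a single computational basis vector: writing $h$ for the length-$n$ row vector labelling the register and recalling that $R_{f_{\CD}}^{*}=\sum_j\omega^{-f_{\CD}(j)}\ket{j}\bra{j}$, one has $\tilde{R}_{f_{\CD}}\ket{h}=\omega^{\Phi(h)}\ket{h}$ with
\begin{equation}
\Phi(h)=\sum_{j\in v_{\bullet}}f_{\CD}(h_j)-\sum_{j\in v_{\star}}f_{\CD}(h_j)=\sum_{j}[F]_{jj}\,f_{\CD}(h_j) ,
\end{equation}
$F$ being the star sign-flip matrix of equation~\eqref{eqn:Fdef}. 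The statement that $\tilde{R}_{f_{\CD}}$ implements $R_{f_{\CD}}$ then reduces to two claims: (i) $\Phi(h)\equiv\Phi(h')\pmod{\dd}$ whenever $h$ and $h'$ lie in the same coset $x^{T}\cdot G_1\oplus\mathrm{span}[G_{0}]$, so the phase can be pulled out of the superposition in equation~\eqref{eqn:log_state2}; and (ii) that common value equals $f_{\CD}(x)$ modulo $\dd$. (I treat $k=1$, the case relevant to color codes; for $k>1$ one simply applies $R_{f_{\CD}}$ to a single logical qudit.)

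The core of the argument is a multinomial expansion of $\Phi$. Write a generic term of the superposition as $h=\bigoplus_i c_i g_i$, with $g_i$ the rows of $G$, $c_i=x$ for the (all-ones) row of $G_1$ and $c_i=y_i$ for the rows of $G_0$. Since $f_{\CD}$ has integer coefficients, $\omega^{f_{\CD}(h_j)}$ is unchanged if $f_{\CD}$ is evaluated at the ordinary integer $\sum_i c_i[g_i]_j$ rather than at its residue mod $\dd$; expanding $f_{\CD}\bigl(\sum_i c_i[g_i]_j\bigr)$ by the multinomial theorem and summing over $j$ against the star signs gives
\begin{equation}
\Phi(h)=\sum_{l=0}^{\CD}a_l\sum_{i_1,\dots,i_l}c_{i_1}\cdots c_{i_l}\,\bigl|(g_{i_1}\circ\cdots\circ g_{i_l})F\bigr| ,
\end{equation}
so every coefficient that appears is exactly the weight of an element-wise product of at most $\CD$ rows of $G$ (repetitions allowed), star-signed by $F$---precisely the objects constrained by Definition~\ref{mDefuniv}.

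Now I would invoke \moity{}. Condition~\textit{1} of Definition~\ref{mDefuniv}, extended to products of fewer than $\CD$ rows by Lemma~\ref{Colorcode_morth_sublemma} (padding a short product with copies of the all-ones row of $G_1$), forces every weight above to vanish unless $i_1=\cdots=i_l$ is a single row of $G_1$; for those surviving monomials condition~\textit{2}, together with the count $|v_{\star}|=|v_{\bullet}|-1$ that underlies the all-ones row, fixes the weight to $1$ at every order $l$. Hence all $y$-dependent terms drop out, establishing claim (i), and what remains is $\Phi(h)=\sum_{l}a_l x^l=f_{\CD}(x)$, establishing claim (ii); the global factor $\omega^{f_{\CD}(x)}$ is exactly the action of the single-qudit $R_{f_{\CD}}$ of equation~\eqref{Eq:QuditR} on $\ket{x_L}$.

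The main obstacle is the bookkeeping in this expansion. One must keep $f_{\CD}$ in ordinary (non-modular) arithmetic until the phase is assembled, so that the exceptional contribution is genuinely equal to $1$ rather than merely $1$ modulo $\dd$---this is exactly where the strong form of \moity{}, as opposed to the weak modular one, is needed---and one must track how $F$ distributes over the element-wise product, reconciling the single star-sign in $\Phi(h)=\sum_j[F]_{jj}f_{\CD}(h_j)$ with the per-row conventions of Definition~\ref{mDefuniv}, while treating the repeated-index monomials that the definition explicitly permits. The remaining ingredients---linearity, the affine-coset structure of equation~\eqref{eqn:log_state2}, and unitarity of $\tilde{R}_{f_{\CD}}$ (hence preservation of normalisation)---are routine.
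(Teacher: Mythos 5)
Your proof follows essentially the same route as the paper's Appendix~A: expand the accumulated phase $\Phi(h)=\sum_j[F]_{jj}f_{\CD}(h_j)$ multinomially into star-signed weights of element-wise products of rows of $G$, then use \moity{} to annihilate every monomial whose index tuple is not a constant run of the $G_1$ row and to normalise the surviving term to $1$. Your version is actually a little more complete than the appendix: the paper writes the condition only for the leading monomial, demanding $\sum_j((z^T\cdot G\cdot F)^{\circ\CD})_j=x^{\CD}\pmod{\dd}$, whereas you carry the full polynomial $f_{\CD}=\sum_l a_l j^l$ and handle the lower-degree weights $\lvert(g_{i_1}\circ\cdots\circ g_{i_l})F\rvert$ with $l<\CD$ explicitly by padding with copies of the all-ones row via Lemma~\ref{Colorcode_morth_sublemma}. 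That is the correct way to make the statement hold for a general $f_{\CD}$, and it makes visible the additional hypothesis (presence of the all-ones row in $G_1$, which holds for all the color codes considered) that the appendix leaves implicit.

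One small correction: you assert that the \emph{strong} (integer-valued, non-modular) form of \moity{} is what is needed here, so that the exceptional weight is genuinely $1$ rather than $1\bmod\dd$. For $R_{f_{\CD}}$ that distinction is immaterial, because the transversal phase is a power of $\omega=e^{2\pi i/\dd}$, so only the value of $\Phi(h)$ modulo $\dd$ ever enters; the weak modular form of \moity{} would already give $\Phi(h)\equiv f_{\CD}(x)\pmod{\dd}$. As the paper states after Definition~\ref{mDefuniv}, the strong form is invoked specifically for the exceptional $T_{3,6}$ gates of Appendix~B, where the phase is a $3\dd$-th root of unity and the bookkeeping must be done modulo $3\dd$ (hence modulo $9$ or $18$) rather than modulo $\dd$.
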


We prove this lemma in App.~\ref{sec:m-ortho}. As a simplified presentation for the main text of the paper, we prove the special and important case of $r=3$. The general proof follows the same approach, but is a little notationally unwieldy. 

\begin{proof}
\begin{lm}\label{tgatelemma}
A $3^{\star}$-orthogonal code has a transversal implementation of the unitary gate $T=R_{j_3}$ such that:
\begin{equation}
\tilde{T} = T[v_{\bullet}] \otimes T^{*}[v_{\star}]
\label{eqn:T_trans} 
\end{equation}
implements $T$ on the code space. 
\end{lm}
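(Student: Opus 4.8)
The plan is to verify directly that the star-conjugate transversal operator $\tilde{T}=T[v_\bullet]\otimes T^{*}[v_\star]$ acts as the logical $T$ on each computational basis codeword $\ket{x_L}$ of equation~\eqref{eqn:log_state2}. First I would record the action of $\tilde{T}$ on an arbitrary computational basis state $\ket{g}$, where $g$ is a length-$n$ row vector whose first $|v_\bullet|$ entries label unstarred qudits. Since $T=\sum_j\omega^{j^3}\ket{j}\bra{j}$ and $T^{*}=\sum_j\omega^{-j^3}\ket{j}\bra{j}$, the operator $\tilde{T}$ is diagonal and multiplies $\ket{g}$ by $\omega^{\sum_{j\in v_\bullet}[g]_j^3-\sum_{j\in v_\star}[g]_j^3}$. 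Using $[F]_{jj}=\pm1$ and hence $[F]_{jj}^3=[F]_{jj}$, this exponent equals $\sum_j[F]_{jj}[g]_j^3=\sum_j[gF]_j^3$, i.e. the weight of the elementwise cube of $gF$. So $\tilde{T}\ket{g}=\omega^{\sum_j[gF]_j^3}\ket{g}$.

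Next I would substitute $\ket{x_L}=\frac{1}{\sqrt{\dd^s}}\sum_{y\in\mathbb{Z}_\dd^s}\ket{g(y)}$ with $g(y)\equiv y^{T}\cdot G_0\oplus x^{T}\cdot G_1\pmod{\dd}$ (I take $k=1$, so $x$ is a scalar and $G_1$ is the single all-ones row $h$; the general-$k$ case is identical and yields a transversal $T$ on the logical qudits). Because $\omega^n$ depends only on $n\bmod\dd$, I may replace $[g(y)F]_j^3$ by the integer expression $[F]_{jj}\big(\sum_i y_i[a_i]_j+x[h]_j\big)^3$, where $a_i$ are the rows of $G_0$, and collect the phase exponent as $E(y)=\sum_j[F]_{jj}\big(\sum_i y_i[a_i]_j+x[h]_j\big)^3$. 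Expanding the cube by the multinomial theorem, every resulting term is a monomial in the $y_i$ and $x$ multiplied by a factor of the form $\sum_j[F]_{jj}[r]_j[r']_j[r'']_j=|(rF)\circ(r'F)\circ(r''F)|$ with $r,r',r''$ rows of $G$ — precisely the weight of an elementwise product of three rows of $GF$ governed by Definition~\ref{mDefuniv} with $\CD=3$.

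Then I would invoke $3^{\star}$-orthogonality term by term. Every monomial other than the pure $x^3\,[h]_j^3$ term contains at least one factor $a_i$ from $G_0$ (possibly alongside repeated copies of $h$), so the three rows are not all identical and drawn from $G_1$, and condition~\textit{1} of Definition~\ref{mDefuniv} forces that weight to vanish — annihilating every term that depends on $y$. The one survivor is $x^3\cdot|(hF)^{\circ 3}|=x^3$ by condition~\textit{2}. Hence $E(y)=x^3$ is independent of $y$, so $\tilde{T}\ket{x_L}=\omega^{x^3}\ket{x_L}$, which is exactly the action of the logical $T$ of equation~\eqref{Eq:QuditT}; moreover this $y$-independence is precisely what certifies that $\tilde{T}$ maps the code space to itself rather than merely phasing individual basis vectors.

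I expect the main obstacle to be purely organizational: carrying out the multinomial bookkeeping and matching each term to the correct clause of Definition~\ref{mDefuniv}, while keeping track of where weights are taken over $\mathbb{Z}$ versus $\mathbb{Z}_\dd$. For the gate $T=R_{j^3}$ treated here, working modulo $\dd$ throughout is enough because $\omega$ has order $\dd$; it is only the exceptional dimensions $\dd=3,6$ — where equation~\eqref{Eq:QuditT36} uses the order-$3\dd$ root $\gamma$ — that genuinely need the exact, non-modular form of condition~\textit{2}, and that refinement, together with the general-degree statement Lemma~\ref{mtranslemma}, is deferred to App.~\ref{sec:m-ortho}.
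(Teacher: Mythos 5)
Your proof is correct and follows essentially the same approach as the paper's: both compute the diagonal phase of $\tilde{T}$ on a basis codeword as $\sum_j[gF]_j^3$, expand it multilinearly over the rows of $G$ (the paper via the basis-vector decomposition $z=\sum_a z_a e_a$, you via the multinomial theorem — the same expansion), and invoke $3^\star$-orthogonality to annihilate every term except $x^3|(hF)^{\circ 3}|=x^3$. Your closing remark about which steps genuinely need the non-modular form of Definition~\ref{mDefuniv} also matches the paper's own discussion.
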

We now examine the conditions that the star-conjugate transversal $\tilde{T}$ implements a logical $\bar{T}$. For this to be true, the phases for each computational basis component of each logical codeword in equation~\eqref{eqn:log_state2}, must agree with the phases defining the gate in equation~\eqref{Eq:QuditT}. Noting that the same phase applies to all terms in the sum in equation~\eqref{eqn:log_state2}, we can write
\begin{equation}\label{Eq:starredT2}
\tilde{T} \ket{ (x,y)^{T}\cdot  G  } = \omega^{x^3} \ket{ (x,y)^{T}\cdot  G },
\end{equation}
where $x\in \mathbb{Z}_d$ and $y  \in \mathbb{Z}_d^{s}$, where $s$ is the number of stabilizer generators. In other words we fix the number of logical qubits $k=1$. 

Applying equations~\eqref{Eq:QuditT} and~\eqref{eqn:T_trans} to the state on the right hand side of equation~\eqref{Eq:starredT2}, we recover the following relationship between the phases on both sides of the equation:
\begin{align}
\label{GtoProve}
 &((x,y)^{T} \cdot G \cdot F) \circ ((x,y)^{T} \cdot G \cdot F) \circ \nonumber \\ 
  & \qquad  ((x,y)^{T} \cdot G \cdot F) = x^3 \pmod{\dd},
\end{align}
where we have right multiplied the $F$ matrix to vector $(x,y)^{T} \cdot G$. It is useful to define $z=(x,y)$ so
\begin{align}
  &\sum_{j}((z^{T} \cdot G \cdot F) \circ (z^{T} \cdot G \cdot F) \circ (z^{T} \cdot G \cdot F))_{j} \nonumber \\ 
  & \qquad\qquad\qquad\qquad= x^3 \pmod{\dd} .
  \label{eqn:exponents}
\end{align}  
Now define $z=\sum_{a} z_{a} e_{a} $ where $z_a$ is the value of the $a$th position of the vector $z$, and where $e_{a}$ are the basis vectors over $\mathbb{Z}_{2}^{s+1}$. For example $e_1 = (1,0,\ldots,0)$, $e_2 = (0, 1,\ldots,0)$ and so on. Re-writing the left hand side of equation~\eqref{eqn:exponents} we obtain
\begin{align}
  &\sum_{j}((z^{T} \cdot G \cdot F) \circ (z^{T} \cdot G \cdot F) \circ (z^{T} \cdot G \cdot F))_{j} \nonumber \\
  &= \sum_{j}\Bigg(\left(\sum_{a=1}^{s+1} z_{a} e^{T}_{a} \cdot G \cdot F \right) \circ \nonumber \\ 
  & \qquad \qquad\left(\sum_{b=1}^{s+1} z_{b} e^{T}_{b} \cdot G \cdot F\right) \circ \nonumber \\ 
  & \qquad \qquad\left(\sum_{c=1}^{s+1} z_{c} e^{T}_{c} \cdot G \cdot F\right)\Bigg)_{j} \pmod{\dd} . \nonumber
\end{align} 
Note that $e^{T}_{a} \cdot G$ is the $a$th row of the matrix G, which we denote $g_a$. Therefore 
\begin{align}
  & \sum_{j}((z^{T} \cdot G \cdot F) \circ (z^{T} \cdot G \cdot F) \circ (z^{T} \cdot G \cdot F))_{j} \nonumber \\
  &= \sum_{j}\Bigg(\left(\sum_{a=1}^{s+1} z_{a} g_{a} \cdot F\right) \circ \nonumber \\ 
  & \qquad \qquad \left(\sum_{b=1}^{s+1} z_{b} g_{b} \cdot F\right) \circ \nonumber \\ 
  & \qquad \qquad \left(\sum_{c=1}^{s+1} z_{c} g_{c} \cdot F\right)\Bigg)_{j} \pmod{\dd}. \nonumber
\end{align} 
we can re-write this as 
\begin{align}
  &\sum_{j}((z^{T} \cdot G \cdot F) \circ (z^{T} \cdot G \cdot F) \circ (z^{T} \cdot G \cdot F))_{j} \nonumber \\ 
  &= \sum_{j} \sum_{a,b,c=1}^{s+1} \big(\left( z_{a} g_{a} \cdot F\right) \circ \left( z_{b} g_{b} \cdot F\right) \circ \nonumber \\
  & \hspace{4cm} \left( z_{c} g_{c} \cdot F\right)\big)_{j} \pmod{\dd}, \nonumber \\ 
  &= \sum_{j}  \sum_{a,b,c=1}^{s+1} \left[ \left(g_{a} \circ g_{b} \circ g_{c} \right) \cdot F \right]_{j} z_{a} z_{b} z_{c} \pmod{\dd}.
   \label{eqn:ref_from_appendix}
\end{align} 
Summing over $j$ gives 
\begin{align}
  &\sum_{j}((z^{T} \cdot G \cdot F) \circ (z^{T} \cdot G \cdot F) \circ (z^{T} \cdot G \cdot F))_{j}\nonumber \\
  & = \sum_{a,b,c=1}^{s+1} \left| \left(g_{a} \circ g_{b} \circ g_{c} \right) \cdot F \right| z_{a} z_{b} z_{c} \pmod{\dd}.
\end{align} 
We break the sum into two parts, so that
\begin{align}
\label{EqnLongG}
  & \sum_{j}((z^{T} \cdot G \cdot F) \circ (z^{T} \cdot G \cdot F) \circ (z^{T} \cdot G \cdot F))_{j} \nonumber \\  
   &=   \sum_{a}  | (g_{a} \circ g_{a} \circ g_{a})\cdot F | z_{a}^3  + \nonumber \\
   &  \sum_{\{ a, b, c | \neg (a = b =c) \}  }  |(g_{a} \circ g_{b} \circ g_{c})\cdot F | z_{a} z_{b } z_{c} \nonumber \\ &\hspace{5cm} \pmod{\dd}.
\end{align}
Now we use the definition of \mo{} matrices. This ensures that the first term is equal to $x^3\pmod {d}$ and the second term is equal to zero. Hence equation~\eqref{GtoProve} is satisfied, which completes the proof of lemma~\ref{tgatelemma}.  
\end{proof}

For larger $\CD$, the proof follows the same lines but is with more cumbersome algebra. The proof of lemma~\ref{mtranslemma} which we present in App.~\ref{sec:m-ortho} completes the proof of theorem~\ref{thm:mainresult}. To prove theorem~\ref{thm:mainresultTgates} we require one further lemma:
 
\begin{lm}\label{tgatelemma36}
A $3^{\star}$-orthogonal qudit code where $d=3$ or $d=6$ has a transversal implementation of the unitary gate $T_{3,6}$ defined in equation~\eqref{Eq:QuditT36} such that:
\begin{equation}\label{eqn:T36_trans} 
\tilde{T_{3,6}} = T_{3,6}[v_{\bullet}] \otimes T_{3,6}^{*}[v_{\star}]
\end{equation}
\end{lm}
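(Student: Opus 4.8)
The plan is to follow the proof of Lemma~\ref{tgatelemma} almost verbatim, the one change being that phases are now tracked over the root of unity $\gamma$ with $\gamma^{3}=\omega$ (hence of order $3d$) rather than over $\omega$. First I would apply the diagonal operator $\tilde{T}_{3,6}$ to a generic computational-basis component $\ket{z^{T}\cdot G}$ of a logical codeword $\ket{x_{L}}$ as in equation~\eqref{eqn:log_state2}, with $z=(x,y)$, $x\in\mathbb{Z}_{d}$, $y\in\mathbb{Z}_{d}^{s}$, and $k=1$. Since $T_{3,6}$ contributes $\gamma^{a^{3}}$ on an unstarred qudit in state $\ket{a}$ and $\gamma^{-a^{3}}$ on a starred one (with $a\in\{0,\dots,d-1\}$ and $a^{3}$ the ordinary integer cube, as in equation~\eqref{Eq:QuditT36}), we get $\tilde{T}_{3,6}\ket{z^{T}\cdot G}=\gamma^{P}\ket{z^{T}\cdot G}$ with $P=\sum_{j}F_{jj}\bigl([z^{T}\cdot G]_{j}\bigr)^{3}$, where $F_{jj}$ is the $j$-th diagonal entry of $F$ and $[z^{T}\cdot G]_{j}$ is the canonical representative in $\{0,\dots,d-1\}$. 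Since the same phase is acquired by every basis component of $\ket{x_{L}}$, the desired logical action is $\ket{x_{L}}\mapsto\gamma^{x^{3}}\ket{x_{L}}$, and $\gamma$ has order $3d$, it suffices to prove $P\equiv x^{3}\pmod{3d}$; this simultaneously shows $\tilde{T}_{3,6}$ preserves the code space.

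The second step is to strip off the modular reduction hidden in $[z^{T}\cdot G]_{j}$. Fixing representatives $z_{a},[g_{a}]_{j}\in\{0,\dots,d-1\}$ and writing $[z^{T}\cdot G]_{j}=\sum_{a}z_{a}[g_{a}]_{j}-d\,q_{j}$ with $q_{j}$ a nonnegative integer carry, expanding the cube shows that every term carrying a factor $q_{j}$ is divisible by $3d$: the two mixed terms carry an explicit factor $3d$ or $3d^{2}$, and $-d^{3}q_{j}^{3}$ is divisible by $3d$ precisely because $3\mid d$ for $d\in\{3,6\}$. Hence $\bigl([z^{T}\cdot G]_{j}\bigr)^{3}\equiv\bigl(\sum_{a}z_{a}[g_{a}]_{j}\bigr)^{3}\pmod{3d}$, and expanding this integer cube and summing over $j$ with the signs $F_{jj}$, exactly as in equations~\eqref{eqn:ref_from_appendix}--\eqref{EqnLongG}, gives
\[
P\equiv\sum_{a}z_{a}^{3}\,\bigl|(g_{a}\circ g_{a}\circ g_{a})\cdot F\bigr| + \sum_{\{ a, b, c | \neg (a = b =c) \} }z_{a}z_{b}z_{c}\,\bigl|(g_{a}\circ g_{b}\circ g_{c})\cdot F\bigr|\pmod{3d}.
\]

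The third step is to invoke $3^{\star}$-orthogonality in its \emph{strong} (integer) form, Def.~\ref{mDefuniv}: every cross weight $|(g_{a}\circ g_{b}\circ g_{c})\cdot F|$ with $\neg(a=b=c)$ vanishes as an integer, hence also modulo $3d$; among the diagonal terms the rows of $G_{0}$ contribute zero and, with $k=1$, the single row of $G_{1}$ contributes weight $1$ (for color codes this is the all-ones row and the weight is $|v_{\bullet}|-|v_{\star}|=1$ by Corollary~\ref{cor_punc}). Since that row is indexed by $z_{1}=x$, we obtain $P\equiv x^{3}\pmod{3d}$, so $\tilde{T}_{3,6}$ acts on the code space as logical $T_{3,6}$, proving the lemma.

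The main obstacle is exactly what makes $d=3,6$ exceptional: because $\gamma$ has order $3d$, one may not reduce exponents modulo $d$, so Lemma~\ref{tgatelemma} does not apply directly, and the argument must instead rest on two facts particular to this regime---that $3\mid d$ kills the carry corrections modulo $3d$, and that the color-code $m^{\star}$-orthogonality established in the proof of Lemma~\ref{Colorcode_morth_lemma} holds over $\mathbb{Z}$ rather than merely modulo $d$, so that the cross terms genuinely vanish. Everything else is the same bookkeeping already carried out for Lemma~\ref{tgatelemma}.
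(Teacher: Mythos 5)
Your proof is correct and follows essentially the same strategy as the paper's Appendix~B: track the phase over $\gamma$ (of order $3d$), strip out the mod-$d$ reduction in the codeword entries, show the carry corrections vanish modulo $3d$, and then invoke the strong (integer) form of $3^{\star}$-orthogonality together with the all-ones row contributing weight~$1$. The one genuine refinement you make is to treat $d=3$ and $d=6$ in a single stroke by observing that the three carry terms in the cube expansion carry factors $3d$, $3d^{2}$, and $d^{3}$, and that $3\mid d^{3}/d=d^{2}$ precisely because $3\mid d$; the paper instead works the arithmetic out separately modulo $9$ and modulo $18$. Your bookkeeping with explicit integer carries $q_{j}$ (so that $[z^{T}G]_{j}=\sum_{a}z_{a}[g_{a}]_{j}-dq_{j}$ and the cube is expanded in ordinary integer arithmetic before reducing) is also slightly cleaner than the paper's floor-function manipulation, which applies the modular reduction term-by-term inside the triple sum in a way that is a little imprecise as written. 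Both routes rely on the same two facts you correctly flag as essential: $3\mid d$, and the strong integer $\CD^{\star}$-orthogonality established for color codes in Lemma~\ref{Colorcode_morth_lemma}.
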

 
The proof is presented in App.~\ref{Sec:triortho_3_and_6}, completing the proof of theorem~\ref{thm:mainresultTgates}. Finally, we remark, that as CSS codes, all color codes admit a transversal $\Lambda(X)$. We conclude this section with a table of star-conjugate transversal gates in qudit color codes of different spatial dimension, see table~\ref{Gatetable}.

\begin{table*}

\begin{tabularx}{\textwidth}{ | c |c | Y | c|}
\hline
Logical operator & Defined in Equation & Star-conjugate transversal implementation & Spatial dimension \\
\hline
$H$ & equation~\eqref{Eq:QuditH} &$H[v_{\bullet}] \otimes H^{*}[v_{\star}]$ & $2$ \\
$\Lambda(X)$ & equation~\eqref{Eq:QuditCX} & Applied blockwise transversally & $\ge 2$ \\
$S$ & equation~\eqref{Eq:QuditS} &$S[v_{\bullet}] \otimes S^{*}[v_{\star}]$ & $\ge 2$ \\
$T$ & equation~\eqref{Eq:QuditT} &$T[v_{\bullet}] \otimes T^{*}[v_{\star}]$ & $\ge 3$ \\
$R_{f_{r}}$ & equation~\eqref{Eq:QuditR} &$R_{f_{r}}[v_{\bullet}] \otimes R_{f_{r}}^{*}[v_{\star}]$ & $\ge r$ \\
\hline
	
\end{tabularx}

\caption{\label{Gatetable} A table illustrating the star-conjugate transversal implementation of some important logical gates and the spatial dimensions of the codes which support them. }	
\end{table*}

\section{Gauge fixing to implement the logical Hadamard}\label{sec:gauge_fixing}

In this section and Sec.~\ref{sec:gauge_color_codes} we show how a universal gate set can be achieved via gauge fixing, or by defining a gauge color code. Although these results follow directly from the qubit cases presented in~\cite{Paetznick13} and~\cite{Bombin13} without the need for any further technical innovations, we include a discussion and explanation of these techniques for completeness.

First we consider gauge fixing. We shall define a \textit{subsystem code} occupying the same lattice as the color code. This code is capable of realising the logical Hadamard gate on the same encoded qudit as the color code, while a set of gauge qudits become corrupted. Fortunately there is a simple protocol to fix the corrupted gauge qudits to re-initialize the color code.

In a $[[n,k,d]]$ subsystem code~\cite{Bacon06,Kribs05,Poulin05,Kribs06} there are three sets of operators that define the code. First, a set of gauge generators is defined, a set of Pauli operators defined on the lattice that do not necessarily mutually commute. These generate the gauge group $\mathcal{A}$. The center of the gauge group (the elements of the group which commute with all other elements) represents the stabilizer of the code $\mathcal{S}$ (up to a free choice of plus or minus phases which must be chosen to define the zero-error code space). The final set of operators to define is the set of logical Pauli operators on the code space, the Pauli operators that commute with $\mathcal{A}$. Subsystem codes can be understood as stabilizer codes in which some of the logical qudits have been demoted to gauge qubits. These gauge qubits are not used to carry information and can be corrupted or measured during operations on the code. 

The logical operators $\bar{X}$ and $\bar{Z}$ are products of $X$ and $Z$ (and $X^*$ and $Z^*$) operators acting on every qudit.  Therefore, a natural candidate for transversal $\bar{H}$ is a global star-conjugate Hadamard.  However, for color codes of $\DD>2$ it can be seen from the structure of the stabilizer generators that such a unitary will not leave the code space invariant. In particular, the $X$ stabilizers act on the $\DD^{\prime}$-cells of the lattice whereas the $Z$ stabilizers act on the $(\DD-\DD^{\prime}+2)$-cells of the lattice. Therefore unless $\DD^{\prime} = \DD-\DD^{\prime}+2$, the $Z$ stabilizers cannot be transformed into valid $X$ (cell) stabilizers, by a global Hadamard. The solution to this is to switch between the stabilizer color codes and a different code in which $X$ stabilizer generators and $Z$ stabilizer generators have identical support.

In $\DD$ spatial dimensions the subsystem code stabilizers are defined as the star-conjugate $X$ and $Z$ operators acting on the qudits contained in a $\DD$-cell of the lattice. The logical operators matching the color code logical operators are the transversal $\bar{X}$ and $\bar{Z}$ defined in equations~\eqref{eqn:log_ops}. In addition there are $X$ and $Z$ gauge operators defined on the $(\DD-1)$-cells, $(\DD-2)$-cells, and so on to the $2$-cells of the lattice, such that they commute with the stabilizer group. They can be identified in $\omega$-commuting pairs, with each pair protecting a gauge qudit.

In higher spatial dimensions there are a larger number of gauge generators and hence more gauge qudits than in lower spatial dimensions. Nevertheless, the basic protocol for realising the logical Hadamard gate in the subsystem code and fixing the gauge to return to the stabilizer code is independent of the spatial dimensions. 

The star-conjugated transversal Hadamard gate,
\begin{eqnarray}
\bar{H} = \tilde{H} = H[v_{\bullet}] \otimes H^{*}[v_{\star}],
\label{eqn:trans_H}
\end{eqnarray}
transforms the transversal logical operators and the stabilizers of the code correctly. However the gauge generators are corrupted under the action of $\bar{H}$, meaning that $\bar{Z}_{j}$ (the $Z$ logical operator for qudit $j$) is transformed not to $\bar{X}_{j}$ as desired but instead to the operator for a different qudit, $\bar{X}_{k}$, causing the gauge qudits to become entangled. 

These corrupted gauge operators must be \textit{fixed}, to return the code to the original stabilizer code in which the other transversal operators can be realized. This fixing is enacted by measuring the stabilizer generators. The measured eigenvalues will indicate the Pauli correction operator that should be applied---this is an operator that restores the corrupted gauge qudit to its initial state while commuting with the other logical operators.

Once the corrections are applied, the gauge qudits are re-initialized to their original state so that any entanglement between the gauge qudits is destroyed. This is equivalent to the stabilizer color code with a logical Hadamard applied to the encoded qudit.



\section{Qudit gauge color codes} \label{sec:gauge_color_codes}

Another approach to the implementation of the transversal Hadamard is to treat the subsystem code  as the base code, and to achieve the transversal non-Clifford gates  by gauge fixing to the  (stabilizer) color code. This switch in perspective leads to the \textit{gauge color codes}, recently introduced by Bombin~\cite{Bombin13}. Far more than a change in perspective, however, gauge color codes gain many useful new properties, which the original color codes do not possess.

The first advantage of the gauge color code construction is that the outcome of stabilizer measurements can  be inferred from measurements of the gauge operators. This offers a practical advantage, since the weight of the gauge generators can be significantly smaller than stabilizer generators. For example, in 3D the $X$ stabilizer generators would include 24-body operators associated with truncated octahedrons, discussed in Sec.~\ref{sec:code_distance}, requiring coherent 24-body measurement. In the equivalent gauge color code, however, the $X$ stabilizers can be decomposed into 4- or 6-body $X$ gauge operator measurements, a much more practically feasible proposition. 

A second advantage (in the $\DD=3$ case) is that the outcome of stabilizer measurements is more robustly encoded in the gauge generator measurement. The extra redundancy leads to to ``single-shot fault-tolerant error correction''~\cite{Bombin14singleshot}---meaning that measurement errors can be accounted for without the need for repeated measurements.

A third advantage is that one can use gauge fixing to fault tolerantly convert~\cite{Anderson14} between codes of different spatial dimension~\cite{Bombin14dimension}, allowing one to combine the advantageous features of both. For example, one could perform all Clifford computations in a 2-D code while only resorting to a 3-D code to achieve a non-Clifford gate.

 We can define qudit gauge color codes in any spatial dimension analogously to Bombin's qubit codes~\cite{Bombin13}. The only difference is that operators are star-conjugated according to the star-bipartition of the lattice.

The most studied gauge color code exists on a 3D lattice and we shall describe the qudit analogue as an example. Gauge generators consist of star-conjugated face operators of both $X$ and $Z$-type. The stabilizer is generated by  3-cell operators whose values can be obtained from the products of the face operators. The star-conjugate transversality of the Hadamard gate for this code is trivial via its definition. To achieve a transversal $T$ gate one must gauge fix back to the stabilizer color code by promoting $Z$ face operators to the stabilizer.

Although the basic properties of color codes transfer immediately to the general qudit case via this definition and by the results presented in previous sections in this paper, the performance of single-shot error correction is less straightforward to analyze, and we leave a detailed study to further work. Single-shot error correction works, in the qubit case, because the relationship between gauge and stabilizer measurement outcomes in the code is such that the values of the stabilizer are mapped to delocalized structures (branching points in string-like extended structures) in the gauge outcomes. This arises because the parity of the outcomes of the face operators around the cell must be equal to the eigenvalue of the stabilizer representing the cell, even parity in for a trivial syndrome, and odd-parity for the non-trivial case.

In qudit codes, the stabilizer measurement outcome may take any of $\dd-1$ non-trivial values. The outcomes of gauge measurements must sum to these values. Thus the relationship between gauge measurements and stabilizer measurements is more complicated than the simple ``branching point or no branching point'' behavior seen in the qubit code. This extra structure, however, should represent additional information about the location and charge of stabilizer outcomes which a decoder could exploit. Thus, with the development of a suitable decoder, we expect  single-shot error correction robust to measurement error to be achieved in qudit gauge color codes, but we leave its the construction of such a decoder, and analysis of its performance, to future work.


\begin{figure}
\includegraphics[width=0.3\textwidth]{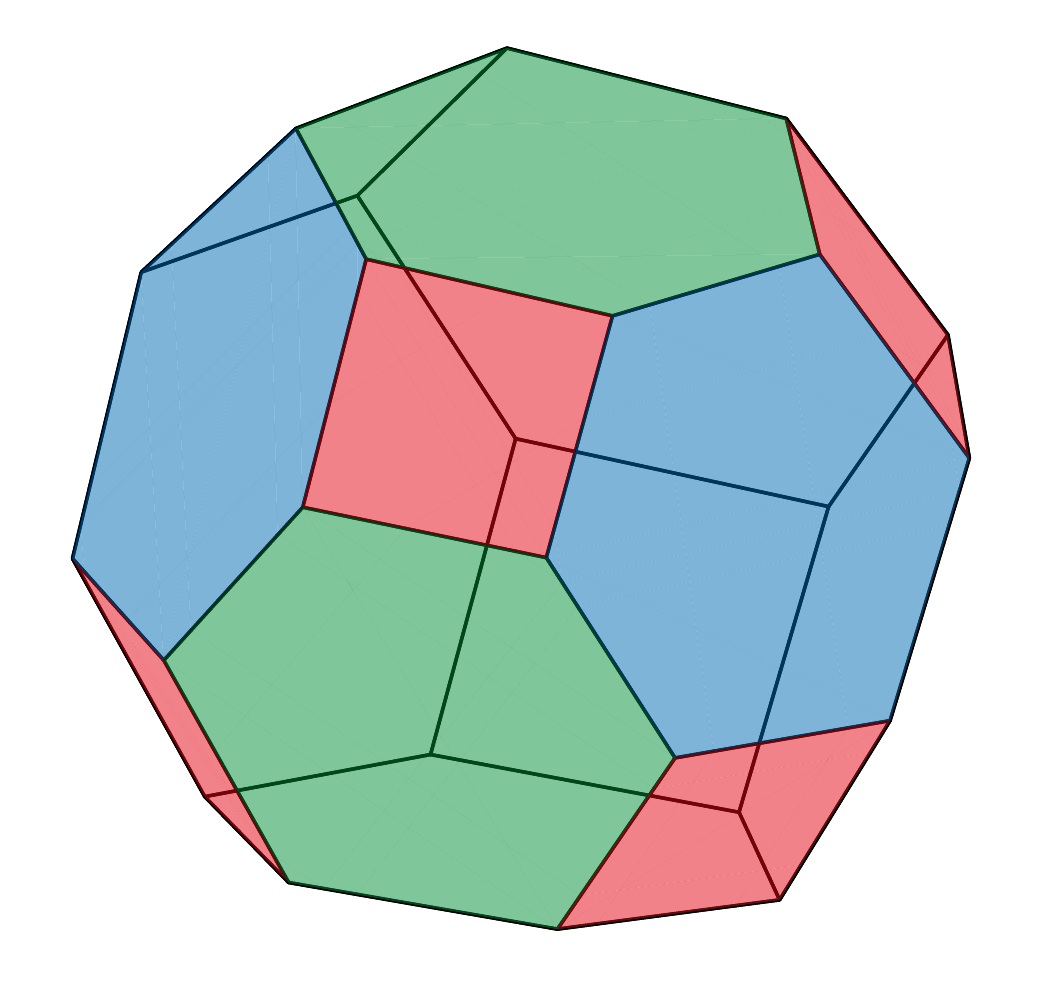} \centering
\caption{(Color online) A cell of the 3D gauge color code. Its faces are 3-colorable and colored red, green and blue in this example. The stabilizer outcomes for the cell can be constructed by adding together $\pmod{\dd}$ the outcomes of the gauge measurements for each of the red faces, or each of the blue faces, or each of the green faces.}
\label{fig:gauge_faces}
\end{figure}



\section{Discussion} \label{sec:conclusion}

We have shown how any existing qubit color code can be generalized to support a qudit color code in arbitrary spatial dimensions, and with an arbitrarily large distance. To do this we introduced the notion of a star-conjugate lattice and showed how this construction allows the Clifford phase, controlled-$X$ and non-Clifford phase gates to be implemented transversally. The set of Clifford group generators is completed by the Hadamard, which can be implemented in a subsystem code. Techniques for switching between the two codes have been outlined. An additional advantage of considering the color codes as gauge codes is reduction of the weight of measurements required to obtain the error syndrome.

There are still many open questions remaining---here we shall outline some of the ones we find most interesting. Although a fault-tolerant decoder has been implemented for a 2D code, no decoder has, to our knowledge, been proposed for color codes in higher spatial dimensions. In Ref.~\cite{Bombin14singleshot} a decoder for a 3D gauge color code is discussed, unfortunately however, a classical algorithm capable of performing the decoding is not specified. One would expect that in a fully fault-tolerant simulation the gate error threshold for the $\DD$-dimensional color codes would be prohibitively low, since the stabilizers defined on the $\DD$-cells would require circuits of many time steps to measure. There is hope however that by switching to the corresponding gauge color code the measurements require fewer gates and therefore the threshold may recover~\cite{Brown15}. The development of efficient decoders for qudit gauge color codes is therefore of significant interest.

The construction of the transversal gates in the color codes and gauge color codes presented in this work holds for qudits of any dimension, $\dd$. However, for non-prime $\dd$ it is not clear that $S$, $H$ and $\Lambda(X)$ are sufficient to generate the Clifford group.  We desire a better understanding of universality and the Clifford hierarchy when using systems of non-prime dimension.

We have identified some other lattices that can support a color code in 3D. It would be interesting to see a comprehensive list of possible lattices. Furthermore, our star-conjugate construction is valid for all of these lattices, but there may be one that offers advantages in terms of the necessary qudit overhead to achieve a desired distance.  

Finally, there are many Reed-Muller codes known for qudits for which no equivalent color code is known~\cite{Campbell12,Campbell14}. Is it possible to use one of these to construct a topological code similar to a color code and study its properties?  Here, we found codes where the distance is purely topological in nature, and so depends on the lattice not on whether we use qudits.  In contrast, qudit Reed-Muller codes provide an improved distance that is algebraic rather than topological in origin.  The potential exists for qudit quantum codes which abstract topological ideas to a more general  setting to generate novel codes of arbitrary code distance with a rich family of transversal logical gates.

\subsection*{Acknowledgements}

DEB would like to thank Hector Bombin and Ben Brown for helpful and elucidatory discussions on color codes and gauge color codes.
FHEW and HA acknowledge the financial support of the EPSRC (grant numbers  EP/G037043/1 and EP/K022512/1, respectively).




\appendix

\section{Proof of lemma~\ref{mtranslemma}}\label{sec:m-ortho}

\setcounter{lm}{6}
\begin{lm}
A \mo{} code has a transversal implementation of the unitary gate $R_{f_{\CD}}$ such that:
\begin{equation}
\tilde{R}_{f_{\CD}} = R_{f_{\CD}}[v_{\bullet}] \otimes R_{f_{\CD}}^{*}[v_{\star}]
\label{appendixlemma7}
\end{equation}
implements $R_{f_{\CD}}$ on the code space. 
\end{lm}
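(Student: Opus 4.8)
The plan is to reproduce, almost verbatim, the $r=3$ computation of Lemma~\ref{tgatelemma} in the main text, with the cube replaced by a $\CD$th power, and to dispose of the sub-leading terms of $f_{\CD}$ by a reduction to monomials. Writing $f_{\CD}(j)=\sum_{\ell=0}^{\CD}a_\ell j^{\ell}$, and using that diagonal gates multiply and complex conjugation distributes over products, we have
\begin{equation}
\tilde R_{f_{\CD}}=\prod_{\ell=0}^{\CD}\big(R_{a_\ell j^{\ell}}[v_{\bullet}]\otimes R_{a_\ell j^{\ell}}^{*}[v_{\star}]\big),
\end{equation}
so it suffices to show that each star-conjugate transversal monomial $R_{a_\ell j^{\ell}}$ with $\ell\le\CD$ implements the logical $R_{a_\ell j^{\ell}}$. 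Since a \mo{} matrix whose $G_1$ block contains the all-ones row is also $\ell^{\star}$-orthogonal for every $\ell\le\CD$ by Lemma~\ref{Colorcode_morth_sublemma} (the codes to which we apply the lemma always have this row), it is enough to treat $\ell=\CD$, i.e.\ to prove the claim for a pure monomial $f(j)=aj^{\CD}$ on a \mo{} code.

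Next I would compute the phase that $\tilde R_{aj^{\CD}}$ imprints on a single computational-basis term $\ket{w}$, $w=z^{T}\cdot G$, of a logical codeword, writing $z=(x,y)$ and fixing $k=1$ as in Eq.~\eqref{eqn:log_state2}. Since $R_{aj^{\CD}}$ contributes $\omega^{a w_j^{\CD}}$ on an unstarred qudit and $R_{aj^{\CD}}^{*}$ contributes $\omega^{-a w_j^{\CD}}$ on a starred one, the overall phase is $\omega^{\Phi}$ with $\Phi=a\sum_j F_{jj} w_j^{\CD}$. Expanding $w_j=\sum_b z_b [g_b]_j$ over the rows $g_b$ of $G$ and collecting terms gives
\begin{equation}
\Phi=a\sum_{b_1,\dots,b_{\CD}} z_{b_1}\cdots z_{b_{\CD}}\,\big|(g_{b_1}\circ\cdots\circ g_{b_{\CD}})\cdot F\big|.
\end{equation}
The one delicate point, which is invisible in the odd-exponent $r=3$ case, is that $F$ must appear \emph{once}, multiplying the elementwise product of the $\CD$ rows (it arises from the single $\sum_j F_{jj}(\cdots)_j$), and must not be distributed over the $\CD$ factors, which would introduce a spurious sign $(-1)^{\CD}$; the quantity above is exactly the weight that \moity{} (Def.~\ref{mDefuniv}) constrains, and one should check it matches the proof of Lemma~\ref{Colorcode_morth_lemma}.

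Finally I would invoke \moity{}: every term whose indices $b_1,\dots,b_{\CD}$ are not all equal, or are all equal to a row of $G_0$, has zero weight, so the only surviving contribution is $b_1=\cdots=b_{\CD}$ equal to the single $G_1$ row, for which $z_{b_1}\cdots z_{b_{\CD}}=x^{\CD}$ and the weight is $1$ by part~2 of Def.~\ref{mDefuniv}. Hence $\Phi=ax^{\CD}$, independent of $y$, so the same phase multiplies every term of $\ket{x_L}$ and $\tilde R_{aj^{\CD}}\ket{x_L}=\omega^{ax^{\CD}}\ket{x_L}$, i.e.\ the logical $R_{aj^{\CD}}$. Multiplying the monomial results over $\ell=0,\dots,\CD$ recovers $\tilde R_{f_{\CD}}\ket{x_L}=\omega^{f_{\CD}(x)}\ket{x_L}$, proving the lemma; for $k>1$ the identical computation yields the logical $\bigotimes_i R_{f_{\CD}}$, applying part~2 of Def.~\ref{mDefuniv} to each $G_1$ row. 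I expect the only real obstacle to be organisational: keeping the multi-index sum and the single, undistributed $F$ straight, and citing the correct orthogonality order for each monomial; no idea beyond the $r=3$ argument is needed.
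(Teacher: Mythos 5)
Your proposal is correct and, in outline, follows the same route the paper takes in Appendix~A: expand the phase picked up by a single computational-basis term of a codeword as a multi-index sum over rows of $G$, kill all cross terms by \moity{}, and identify the surviving diagonal term with $x^{\CD}$. However you are noticeably more careful than the paper on two points, and both are worth retaining.

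First, the monomial reduction. The paper's appendix proof begins directly from $\tilde R_{f_{\CD}}\ket{z^T\cdot G}=\omega^{x^{\CD}}\ket{z^T\cdot G}$, i.e.\ it proves only the leading-monomial case and never returns to a general degree-$\CD$ polynomial $f_{\CD}(j)=\sum_\ell a_\ell j^{\ell}$. Your decomposition $\tilde R_{f_{\CD}}=\prod_\ell\tilde R_{a_\ell j^{\ell}}$ followed by Lemma~\ref{Colorcode_morth_sublemma} fills this in, at the price of requiring the all-ones row in $G_1$. That hypothesis is not part of the \moity{} definition and so is not available for an arbitrary \mo{} code, though it holds for every color code to which the lemma is applied; you flag this yourself, and the paper's proof has the same implicit restriction.

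Second, and more substantively, the single application of $F$. The phase that $\tilde R_{aj^{\CD}}$ actually imprints on $\ket{w}$ with $w=z^T\cdot G$ is $a\sum_j F_{jj}\,w_j^{\CD}$, as you write, with $F$ appearing once. The paper's starting point (Eq.~\eqref{eqn:m_exponents}) is $\sum_j((z^T\cdot G\cdot F)^{\circ\CD})_j=\sum_j F_{jj}^{\CD}w_j^{\CD}$, which distributes $F$ over the $\CD$ factors, and the subsequent manipulation that reaches $\sum_{\vec a}|g_{\vec a}\cdot F|\,z_{\vec a}$ silently replaces $F_{jj}^{\CD}$ by $F_{jj}$. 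This is harmless when $\CD$ is odd (including the paper's main case $\CD=3$), because $F_{jj}\in\{\pm1\}$, but for even $\CD$ the two expressions differ, and indeed Definition~\ref{mDefuniv} as literally written (weights of $\CD$-fold elementwise products of rows of $GF$) is then strictly stronger than what the proof of Lemma~\ref{Colorcode_morth_lemma} establishes (which applies $F$ once to the $\CD$-fold product). Your version computes the physically correct phase and matches the quantity that Lemma~\ref{Colorcode_morth_lemma} actually controls, so you have closed a genuine parity-dependent gap rather than merely restated the argument.
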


\begin{proof}
For equation~\eqref{appendixlemma7} to be true, for all $z=(x,y)$ we have
\begin{equation}
    \tilde{R}_{f_{\CD}} \ket{ z^{T}\cdot G  } = \omega^{  z_{j}^{\CD}} \ket{ z^{T}\cdot G }.
    \label{Eq:starredR3}
\end{equation}
Applying equations~\eqref{Eq:QuditR} and~\eqref{eqn:R_trans} to the state on the right hand side of equation~\eqref{Eq:starredR3}, the phases on both sides of the equation are equal if
\begin{equation}
 (z^{T} \cdot G \cdot F)^{\circ \CD} = x^{\CD} \pmod{\dd},
   \label{eqn:m_exponents}
\end{equation}
where the notation $(.)^{\circ \CD}$ means a circle-product (see equation~\eqref{eqn:circle_prod}) taken between $\CD$ identical elements. 
Proceeding using the same notation as in Sec.~\ref{sec:transversal} we find 
\begin{align}
  &\sum_{j}((z^{T} \cdot G \cdot F)^{\circ \CD})_{j} \nonumber \\
  &= \sum_{j}\left(\left( \sum_{a} z_{a} g_{a}.F \right)^{ \circ \CD }\right)_{j} \pmod{\dd} , \nonumber \\
  &= \sum_{j}\left( \sum_{\vec{a} \in [1,\ldots,k+s]^{\CD}} z_{\vec{a}} \: g_{\vec{a}}.F \right)_{j} \pmod{\dd},
\end{align} 
where $[a,\ldots,b]^{c}$ is a vector of length $c$ with elements that take values between $a$ and $b$, and where $g_{\vec{a}}$ is the bitwise product of $\CD$ rows of $G$, so
\begin{equation}
    g_{\vec{a}} = g_{a_1} \circ g_{a_2} \circ \ldots  \circ g_{a_{\CD}}.
\end{equation}
We can re-write this as 
\begin{align}
  &\sum_{j}((z^{T} \cdot G \cdot F)^{\circ \CD})_{j} \nonumber \\ 
  & = \sum_{j}  \sum_{\vec{a} \in [1,\ldots,k+s]^{\CD}} \left[ g_{\vec{a}} \cdot F \right]_{j} z_{\vec{a}} \pmod{\dd}.
\end{align} 
Summing over $j$ gives 
\begin{align}
  &\sum_{j}((z^{T} \cdot G \cdot F)^{\circ \CD})_{j} \nonumber \\
  & = \sum_{\vec{a} \in [1,\ldots,k+s]^{\CD}} \left| g_{\vec{a}} \cdot F \right| z_{\vec{a}} \pmod{\dd}.
   \label{eqn:long_m-ortho}
\end{align} 

We now use the definition of a \mo{} code. This means that all terms satisfying $\neg (a_{1} = \ldots = a_{\CD})$ in the summation on the right hand side vanish. This implies that the only non-zero contribution is represented by $\vec{a}= (a,a,\ldots,a)$, a $\CD$-element vector where every element is $a$. Recall also that such terms vanish unless $a=1$, in other words it is the $\CD$-fold elementwise product of the vector in $G_{1}$. We note that $z_{\vec{a}} = z_{a}^{\CD}$, leaving only
\begin{align}
   \sum_{j}((z^{T} \cdot G \cdot F)^{\circ \CD})_{j} &=  | g_{\vec{a}} \cdot F | z_{a}^{\CD} \pmod{\dd}, \nonumber\\
   &=  x^{\CD} \pmod{\dd},
\end{align}
which proves the desired equation~\eqref{eqn:m_exponents} is satisfied.  
\end{proof}

\section{Proof of lemma~\ref{tgatelemma36}} \label{Sec:triortho_3_and_6}

\setcounter{lm}{8}
\begin{lm}
A $3^{\star}$-orthogonal qudit code where $d=3$ or $d=6$ has a transversal implementation of the unitary gate $T_{3,6}$ defined in equation~\eqref{Eq:QuditT36} such that:
\begin{equation} 
\tilde{T}_{3,6} = T_{3,6}[v_{\bullet}] \otimes T_{3,6}^{*}[v_{\star}]
\end{equation}
\end{lm}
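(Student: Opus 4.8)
The plan is to run the argument in the proof of lemma~\ref{tgatelemma} essentially verbatim, but with two changes: (i) replace the $d$-th root of unity $\omega$ by the $3d$-th root of unity $\gamma$ of equation~\eqref{Eq:QuditT36}, so that all phase exponents must now be tracked modulo $3d$ rather than modulo $d$; and (ii) use the \emph{strong} form of $3^\star$-orthogonality in Def.~\ref{mDefuniv}, which asserts that the relevant weights vanish (or equal $1$) as integers, not merely modulo $d$. This strength is exactly what allows the bookkeeping to be lifted from $\mathbb{Z}_d$ to $\mathbb{Z}_{3d}$; note that modulo $d$ alone one only recovers the $Z$ gate, since $j^3\equiv j\pmod d$ for $d=3,6$, which is precisely why these cases must be treated separately.

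As before, I would reduce to $k=1$ and write $z=(x,y)$, so that it suffices to prove $\tilde T_{3,6}\ket{z^T\!\cdot G}=\gamma^{x^3}\ket{z^T\!\cdot G}$ for all $z$, with $x^3$ (and every cube below) evaluated in ordinary integer arithmetic, equivalently modulo $3d$ since $\gamma^{3d}=\omega^d=1$. Put $w=z^T\!\cdot G$, let $\hat w_j\in\{0,\dots,d-1\}$ be the integer representative of $w_j$, and let $W_j:=\sum_a \hat z_a\hat g_{a,j}$ be the corresponding \emph{unreduced} integer sum (with $\hat z_a$ and $\hat g_{a,j}$ the integer representatives of $z_a$ and of the $j$th entry of the row $g_a$ of $G$), so that $\hat w_j\equiv W_j\pmod d$. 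By the definition of $T_{3,6}$ and of the star-conjugate transversal form, $\tilde T_{3,6}$ multiplies $\ket{w}$ by $\gamma^{E}$ with $E:=\sum_j \hat w_j^{\,3}\,[F]_j$, the cube taken in $\mathbb{Z}$.

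The crux, and the step I expect to be the main obstacle, is a short number-theoretic lemma: if $3\mid d$ then $n\equiv n'\pmod d$ implies $n^3\equiv n'^3\pmod{3d}$ (write $n'=n+dk$; the difference is $dk(3n^2+3ndk+d^2k^2)$, which is divisible by $3d$ because $3\mid d$ forces $3\mid d^2k^3$). Since $3\mid d$ holds exactly for the two cases $d=3$ and $d=6$ in question, this gives $\hat w_j^{\,3}\equiv W_j^{\,3}\pmod{3d}$, hence $E\equiv\sum_j W_j^{\,3}\,[F]_j\pmod{3d}$. Now $W_j^{\,3}=\bigl(\sum_a\hat z_a\hat g_{a,j}\bigr)^3=\sum_{a,b,c}\hat z_a\hat z_b\hat z_c\,\hat g_{a,j}\hat g_{b,j}\hat g_{c,j}$ is an exact integer identity (no reduction needed), so summing over $j$,
\begin{equation}
E\;\equiv\;\sum_{a,b,c}\hat z_a\hat z_b\hat z_c\,\bigl|(g_a\circ g_b\circ g_c)\cdot F\bigr|\pmod{3d}.
\end{equation}

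Finally I would apply Def.~\ref{mDefuniv} with three rows: every term with $\neg(a=b=c)$, and every diagonal term whose row lies in $G_0$, contributes weight $0$ exactly, while the unique surviving term is $a=b=c$ equal to the all-ones row of $G_1$ (the logical $\bar X$ of every color code), whose weight is $|v_\bullet|-|v_\star|=1$. Hence $E\equiv\hat z_1^{\,3}\equiv x^3\pmod{3d}$; in particular the $y$-dependence has cancelled, so $\tilde T_{3,6}$ is diagonal in the logical basis with phases $\gamma^{x^3}$, i.e.\ it implements $\bar T_{3,6}$. The essential point is that a merely weak-$m^\star$-orthogonal code would not survive the lift to $\mathbb{Z}_{3d}$, which is why Def.~\ref{mDefuniv} was deliberately stated without modular arithmetic.
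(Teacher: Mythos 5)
Your proof is correct and proves the same claim by the same overall strategy as the paper: reduce to showing $\sum_j \hat w_j^{\,3}[F]_j\equiv x^3\pmod{3d}$, expand the cube, and invoke the \emph{strong} (integer-valued) form of $3^\star$-orthogonality so that everything survives modulo $3d$. Where you differ is in the bookkeeping, and your version is arguably cleaner. The paper proves the two cases $\dd=3$ and $\dd=6$ separately, in each case writing the mod-$\dd$ reductions via the floor identity $a\bmod n=a-\lfloor a/n\rfloor n$, expanding the cubic product of three such expressions, and then observing that all the terms carrying floor factors come with coefficients ($9,27$ or $18,108,216$) that vanish modulo $9$ or $18$. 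You instead isolate the single number-theoretic fact doing all the work --- namely that $3\mid\dd$ and $n\equiv n'\pmod{\dd}$ imply $n^3\equiv n'^3\pmod{3\dd}$ --- prove it in one line via $n'^3-n^3=3d(n^2k+ndk^2)+d^3k^3$ with $3\mid d^2$, and then the rest of the argument is a direct integer expansion followed by strong $3^\star$-orthogonality, handling $\dd=3$ and $\dd=6$ uniformly under the hypothesis $3\mid\dd$. Your framing also makes explicit a step that the paper's proof only performs implicitly: passing from the cube of a mod-$\dd$ reduced matrix entry to the cube of the unreduced integer sum requires exactly this lifting lemma, before any floor expansion is even contemplated. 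Stating and proving the lemma up front both closes that small gap and makes clear why strong (rather than weak, mod-$\dd$) $3^\star$-orthogonality is essential, which you correctly emphasize.
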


where 

\begin{equation}
T_{3,6} = \sum_{j \in \mathbb{Z}_{\dd}} \gamma^{j^3} \ket{j}\bra{j}
\label{Eq:QuditT36app}
\end{equation}
where $\gamma^3=\omega$ and where the function in the exponent $j^3$ is evaluated in regular arithmetic (or equivalently modulo $3d$). We shall  prove lemma~\ref{tgatelemma36} for $\dd=3$ and $\dd=6$ separately, since each requires calculation in arithmetic with a different modularity.

%

\subsection{$\dd=3$}

\begin{proof}
First we consider the case of qutrits, where $d=3$. In this case,   $\gamma=e^{i \frac{2\pi}{9}}$. Hence, our calculation will make use of addition modulo 9. Proceeding as before we start with:
\begin{equation}
    \tilde{T}_{3,6} \ket{ z^{T}\cdot G  } = \gamma^{ x^{3}} \ket{ z^{T}\cdot G }.
\end{equation}
The phases on both sides of this equation match if
\begin{equation}
  \sum_{j}(( z^{T}\cdot  G \cdot F)^{\circ 3} )_{j} =  x^3 \pmod{9}.
  \label{eqn:d3_m_eq}
\end{equation}  
This expression is a mix of modulo 3 and modulo 9 arithmetic so we must proceed with care. The matrix multiplication is mod 3, whereas all other arithmetic is mod 9. Re-writing the left hand side as
\begin{align}
	&\sum_{j}(( z^{T}\cdot  G \cdot F)^{\circ 3} )_{j} \\ \nonumber &= \sum_{j} \sum_{a,b,c=1}^{s+1} \left\{ [g_{a} \cdot F]_{j} z_{a}  \pmod{3} \right\} \circ \nonumber \\
	& \quad \left\{ [g_{b} \cdot F]_{j} z_{b}  \pmod{3} \right\} \circ \nonumber \\
	& \qquad \left\{ [g_{c} \cdot F]_{j} z_{c}  \pmod{3} \right\},
\end{align}  
 we can use the following identity to express the modulo 3 reduction in terms of standard addition 
\begin{equation}
a \pmod{n} = a - \lfloor \frac{a}{n} \rfloor n.
\label{eqn:mod_def}
\end{equation}
Setting aside the modulo 9 arithmetic for the moment and evaluating this expression first in standard arithmetic,  the left hand side now expands to the following form:
\begin{widetext}
\begin{equation}
\begin{split}	
\sum_{j}(( z^{T}\cdot  G \cdot F)^{\circ 3} )_{j}	&= \sum_{j} \sum_{a,b,c=1}^{s+1} \left( [g_{a} \cdot F]_{j} z_{a} - 3\lfloor \frac{[g_{a} \cdot F]_{j} z_{a}}{3} \rfloor \right) \nonumber \\
	& \qquad \circ \left( [g_{b} \cdot F]_{j} z_{b} - 3\lfloor \frac{[g_{b} \cdot F]_{j} z_{b}}{3} \rfloor \right) \circ \left( [g_{c} \cdot F]_{j} z_{c} - 3\lfloor \frac{[g_{c} \cdot F]_{j} z_{c}}{3} \rfloor \right)  \\
	&= \sum_{j} \sum_{a,b,c=1}^{s+1} [(g_{a} \circ g_{b} \circ g_{c}) \cdot F]_{j} z_{a} z_{b} z_{c} - 9 [(g_{a} \circ g_{b})\cdot F]_{j} z_{a} z_{b} \lfloor \frac{[g_{c} \cdot F]_{j} z_{c}}{3} \rfloor \nonumber\\ & \qquad + 27 [g_{a} \cdot F]_{j} z_{a} \lfloor \frac{[g_{b} \cdot F]_{j} z_{b}}{3}\rfloor \lfloor \frac{[g_{c} \cdot F]_{j} z_{c}}{3}\rfloor \nonumber\\ & \qquad\qquad  - 27 \lfloor \frac{[g_{a}\cdot F]_{j} z_{a}}{3}\rfloor\lfloor \frac{[g_{b} \cdot F]_{j} z_{b}}{3}\rfloor \lfloor \frac{[g_{c}\cdot F]_{j} z_{c}}{3}\rfloor.
\end{split}
\end{equation}
\end{widetext}

The last three terms in this summation are integer multiples of 9 and are thus equal $0 \pmod{9}$. Thus if we reimpose modulo 9 arithmetic we are left with:
\begin{align}
&\sum_{j}(( z^{T}\cdot  G \cdot F)^{\circ 3} )_{j}	\pmod{9}\nonumber \\
&=\sum_{j} \sum_{a,b,c=1}^{s+1} [(g_{a} \circ g_{b} \circ g_{c}) \cdot F]_{j} z_{a} z_{b} z_{c} \nonumber \\
& \hspace{4cm} \pmod{9}
\end{align}

Finally, invoking \moity{} completes the proof. 
\end{proof}

Note that we are using the fact that the definition of \moity is stated in regular arithmetic which implies the modulo 9 orthogonality used here.

\subsection{$\dd=6$}

\begin{proof}
When $d=6$ the phase in $T_{3,6}$ is $\gamma=e^{i \frac{2\pi}{18}}$ and thus proceed with modulo 18 arithmetic. 
\begin{equation}
    \tilde{T}_{3,6} \ket{ z^{T}\cdot G  } = \gamma^{ x^{3}} \ket{ z^{T}\cdot G }.
\end{equation}
We expand the right hand side and find that
\begin{align}
	&\sum_{j}(( z^{T}\cdot  G \cdot F)^{\circ 3} )_{j} \nonumber \\
	&= \sum_{j} \sum_{a,b,c=1}^{s+1} [\left(g_{a} \circ g_{b} \circ g_{c} \right)\cdot F]_{j} z_{a} z_{b} z_{c} \pmod{18}.
\end{align}

Converting modulo 18 arithmetic to normal arithmetic we find,
\begin{widetext}
\begin{align}
\begin{split}
	\sum_{j}(( z^{T}\cdot  G \cdot F)^{\circ 3} )_{j} &= \sum_{j} \sum_{a,b,c=1}^{s+1} [(g_{a} \circ g_{b} \circ g_{c}) \cdot F]_{j} z_{a} z_{b} z_{c} - 18 [(g_{a} \circ g_{b})\cdot F]_{j} z_{a} z_{b} \lfloor \frac{[g_{c} \cdot F]_{j} z_{c}}{6} \rfloor \nonumber\\ & \qquad + 108 [g_{a} \cdot F]_{j} z_{a} \lfloor \frac{[g_{b} \cdot F]_{j} z_{b}}{6}\rfloor \lfloor \frac{[g_{c} \cdot F]_{j} z_{c}}{6}\rfloor \nonumber\\ & \qquad\qquad - 216 \lfloor \frac{[g_{a}\cdot F]_{j} z_{a}}{6}\rfloor\lfloor \frac{[g_{b} \cdot F]_{j} z_{b}}{6}\rfloor \lfloor \frac{[g_{c}\cdot F]_{j} z_{c}}{6}\rfloor .
\end{split}
\end{align}
\end{widetext}
Casting this back into modulo 18 arithmetic we can cancel the terms equal to integer multiples of 18, leaving:
\begin{align}
&\sum_{j}(( z^{T}\cdot  G \cdot F)^{\circ 3} )_{j} \pmod{18} \nonumber\\ 
&= \sum_{j} \sum_{a,b,c=1}^{s+1} [(g_{a} \circ g_{b} \circ g_{c}) \cdot F]_{j} z_{a} z_{b} z_{c}	\pmod{18}
\end{align}

Finally, via \moity{} we recover
\begin{equation}
\sum_{j}(( z^{T}\cdot  G \cdot F)^{\circ 3} )_{j} \pmod{18}= x^3	\pmod{18}
\end{equation}
completing the proof. 
\end{proof}

\clearpage

\end{document}